\definecolor{cornellred}{RGB}{196,18,48}
\definecolor{dartmouthgreen}{RGB}{0,112,60}
\definecolor{cisorange}{RGB}{247,147,33}
\definecolor{cisblue}{RGB}{67,199,244}
\definecolor{purple}{RGB}{96, 24, 168}
\definecolor{orange}{RGB}{198,79,0}
\newcommand{\cat}{%
\mathbf%
}
\newcommand{\sem}[1]{
  \left\llbracket #1 \right\rrbracket
}
\newcommand{\orange}[1]{
\textcolor{orange}{#1}
}
\newcommand{\purple}[1]{
\textcolor{purple}{#1}
}
\newcommand{\set}[2]{\{#1 \, | \, #2 \}}
\newcommand{\smid}{\ \mid \ }
\newcommand{\defined}{\coloneqq}
\newcommand{\subst}[3]{#1\{#2 / #3\}}
\newcommand{\nat}{\mathbb{N}}
\newcommand{\R}{\mathbb{R}}
\newcommand{\M}{\mathcal{M}}
\newcommand{\Pcal}{\mathcal{P}}
\newcommand{\lamb}[2]{\lambda #1. \ #2}
\newcommand{\app}[2]{#1 \ #2}
\newcommand{\letin}[3]{\mathsf{let}\ #1 = #2 \ \mathsf{in}\ #3}
\newcommand{\sample}[3]{\mathsf{sample}\ #1 \ \mathsf{as } \ #2 \ \mathsf{in}\ #3}
\newcommand{\lto}{\multimap}
\newcommand{\xto}{\xrightarrow}
\newcommand{\coin}{\mathsf{coin}}
\newenvironment{typing}
{\begin{center}
    \begin{tabular}{c c}
      \begin{mathpar}
}
{
   \end{mathpar}
  \end{tabular}
\end{center}
}
\begin{document}

\title{A Higher-Order Language for Markov Kernels and Linear Operators}
\titlerunning{}
%
\author{Pedro H. Azevedo de Amorim\inst{1}}
\authorrunning{P. H. Azevedo de Amorim}
%
\institute{Cornell University, Ithaca NY, USA}
%
%
%
\maketitle              
%

\begin{abstract}
Much work has been done to give semantics to probabilistic programming languages. In recent years, most of the semantics used to reason about probabilistic programs fall in two categories: semantics based on Markov kernels and semantics based on linear operators.

Both styles of semantics have found numerous applications in reasoning about probabilistic programs, but they each have their strengths and weaknesses. Though it is believed that there is a connection between them there are no languages that can handle both styles of programming.

In this work we address these questions by defining a two-level calculus and its categorical semantics which makes it possible to program with both kinds of semantics. From the logical side of things we see this language as an alternative resource interpretation of linear logic, where the resource being kept track of is sampling instead of variable use. 
\end{abstract}



\keywords{Linear Logic, Probabilistic Programming, Categorical Semantics.}

\section{Introduction}

Probabilistic primitives have been a standard feature of programming
languages since the 70s. At first, randomness was mostly used to
program so called random algorithms, i.e. algorithms that require
access to a source of randomness. Recently, however, with the rise of
computational statistics and machine learning, randomness is also used
to program statistical models and inference algorithms.

Programming languages researchers have seen this rise in interest as
an opportunity to further study the interaction of probability and
programming languages, establishing it as an active subfield within
the PL community.


One of the main goals of this subfield is giving semantics to programming languages that are both expressive in the regular PL sense as well as in its abilities to program with randomness. One particular difficulty is that the mathematical machinery used for probability theory, i.e. measure theory, does not interact well with higher-order functions \cite{aumann1961}.

Currently, there are two classes of models of probabilistic programming --- in its broad sense --- that have found numerous applications: models based on linear logic and models based on Markov kernels. Since each kind of semantics has peculiarities that make them more or less adequate to give semantics to expressive programming languages, it is an important theoretical question to understand how these classes of models are related.

\subsubsection*{Linear Logic for Probabilistic Semantics}

The models of linear logic that have been used to give semantics to probabilistic languages are usually based on categories of vector spaces where programs are denoted by linear operators. We highlight two of them:
\begin{itemize}
\item Ehrhard et. al \cite{stablecones,lincone,pcoh} have defined
  models of linear logic with probabilistic primitives and have used
  the translation of intuitionistic logic into linear logic
  $A \to B = !A \lto B$, where $!A$ is the exponential modality, to
  give semantics to a stochastic $\lambda$-calculus.
\item Dahlqvist and Kozen \cite{roban} have defined an imperative,
  higher-order, linear probabilistic language and added a type constructor $!$ to
  accommodate non-linear programs.
\end{itemize}

The main advantage of models based on linear logic is that programs
are denoted by linear operators between spaces of distributions,
a formalism that has been extensively used to reason about
stochastic processes, as illustrated by Dahlqvist and Kozen who
have used results from ergodic theory to reason about a Gibbs
sampling algorithm written in their language, and by Clerc et al.
who have shown how Bayesian inference can be given semantics
using adjoint of linear operators \cite{clerc2017}.

Unfortunately, these insights are hard to realize in practice, since
languages based on linear logic enforce that variables must be used
exactly once, making it hard to use it as a programming language. The
usual way linear logic deals with this limitation is through the
$!$ modality which allows variables to be reused.

The problem with the exponential modality, when it comes to
probabilistic programming, is that they are usually difficult to
construct, do not have any clear interpretation in terms of
probability, making the linear operator formalism not applicable
anymore and, more operationally, through its connections with
call-by-name (CBN) semantics \cite{maraist1999}, makes it
mathematically hard to reuse sampled values.

Ehrhard et al. have found a way around this problem by introducing a
call-by-value (CBV) $\mathsf{let}$ operator that allows samples to be
reused \cite{stablecones,tasson2019}. In the discrete case this
operator is elegantly defined by a categorical argument which is
unknown to scale to the continuous case, which they deal with by
making use of an ad-hoc construction that is unclear if it can be
generalized to other models of linear logic.  Therefore, our current
understanding of models of linear logic does not provide a uniform way
of reusing samples.

The difference between CBV and CBN can be illustrated by the program
$\letin x {\mathsf{coin}} {x + x}$, where $\coin$ is a primitive that
outputs $0$ or $1$ with equal probability. In the CBN semantics each
use of $x$ corresponds to a new sample from $\coin$, whereas in the
CBV semantics the coin is only sampled once.

A subtler problem of probabilistic models based on linear logic is
that they are ill-equipped to program with joint distributions. For
instance, the language proposed by Ehrhard et. al can be easily
extended with product types which, under their semantics, would make
the type $\R \times \R$ be interpreted as $\M\R \times \M \R$, where
$\M\R$ is the set of distributions over $\R$ -- which is isomorphic to
the set of independent distributions over $\R^2$. Dahlqvist and Kozen
deal with this issue by adding primitive types $\R^n$ to their
language which are interpreted as the set of joint distributions over
$\R^n$. However, since they are not defined using the type
constructors provided by the semantic domain, programs of type $\R^n$
can only be manipulated by primitives defined outside the language.

\subsubsection*{Markov Kernel Semantics}

Markov kernels are a generalization of transition matrices,
i.e. functions that map states to probability distributions over
them. They are appealing from a programming languages perspective
because their programming model is usually captured by monads and
Kleisli arrows, a common abstraction in programming languages
semantics, and have been extensively used to reason about
probabilistic programs \cite{azevedo2019,scibior2018,barthe2014}. By
being related to monadic programming they differ from their linear
operator counterpart by being able to naturally capture a
call-by-value semantics which, as we argued above, is the most natural
one for probabilistic programming.

Unfortunately, even though these semantics can be generalized to
continuous distributions, they are notoriously brittle when it comes
to higher-order programming. Only recently, with the introduction of
quasi Borel spaces \cite{qbs} and its probability monad, it is
possible to give a kernel-centric semantics to higher-order
probabilistic programming with continuous distributions.

However, due to quasi Borel spaces being a different foundation to
probability theory, it is unclear which theorems and theories can be
generalized to higher-order. For instance, martingale theory has been
used in Computer Science to reason about termination of probabilistic
programs \cite{chakarov2013,mciver2017,huang2019}. In order to
generalize these ideas to higher-order functions it would be necessary
to define a quasi Borel version of martingales and prove appropriate
versions of the main theorems from martingale theory, a non-trivial
task.

\subsubsection*{Our Work: Combining both Kinds of Semantics}

Though both styles of semantics provide insights into how to interpret
probabilistic programming languages (PPL), it is still too early to
claim that we have a ``correct'' semantics which subsumes all of the
existing ones. Both approaches mentioned above have their
advantages and drawbacks.



In this work we shed some light into how both semantics relate to one
another by showing that it is possible to use both styles of semantics
to interpret a linear calculus that has higher-order functions, looser
linearity restrictions, a uniform way of dealing with sample reuse
and better syntax for programming joint distributions while still
being close to their kernel and linear operator
counterparts. Interestingly, we identify the joint distribution
problem described above to be a consequence of linear logic requiring
the non-linear product to be cartesian. In order to tackle this
problem we build on categorical semantics of linear logic and on
recent work on Markov categories, a suitable categorical
generalization of Markov kernels defined using semicartesian products.

We bridge the gap between these semantics by noting that the regular
resource interpretation of linear logic, i.e. $A \lto B$ being
equivalent to ``by using one copy of $A$ I get one copy of $B$'' is
too restrictive an interpretation for probabilistic programming.
Instead, we should think of usage as being equivalent to
sampling. Therefore the linear arrow $A \lto B$ should be thought of
as ``by sampling from $A$ once I get $B$'', which is the computational
interpretation of Markov kernels.
 
 
We realize this interpretation through a multilanguage approach: we
have one language that programs Markov kernels, a second language that
programs linear operators and add syntax that transports programs from
the former language into the latter one. To justify the viability of
our categorical framework we show how existing probabilistic semantics
are models to our language and show how, under mild conditions, this
semantics can be generalized to commutative effects.
 
%
%

 Our contributions are:
\begin{itemize}
\item[$\bullet$] We define a multi-language syntax that can program both Markov kernels as well as linear operators.(\textsection\ref{sec:syntax})
\item[$\bullet$] We define its categorical semantics and prove certain interesting equations satisfied by it. (\textsection\ref{sec:catsem})
\item[$\bullet$] We show that our semantics is already present in existing models for discrete and continuous probabilistic programming. (\textsection\ref{sec:concmod})
\item[$\bullet$] We show how our semantics can be generalized to commutative effects. (\textsection\ref{sec:commutative})
\end{itemize}

\section{Mathematical Preliminaries}

We are assuming that the reader is familiar with basic notions from category theory such as categories, functors and monads.
\subsection*{Probability Theory}
\label{sec:probtheory}

Transition matrices are one of the simplest abstractions used to model
stochastic processes. Given two countable sets $A$ and $B$, the entry
$(a, b)$ of a transition matrix is the probability of ending up in
state $b \in B$ whenever you start from the initial state $a \in A$
and every row adds up to $1$.

\begin{definition}
  The category $\cat{CountStoch}$ has countable sets as objects and
  transition matrices as morphisms. The identity morphism is the
  identity matrix and composition is given by matrix multiplication.
\end{definition}



Though transition matrices are conceptually simple, they can only model discrete probabilistic processes and, in order to generalize them to continuous probability we must use measurable sets and Markov kernels.

\begin{definition}
  A measurable set is a pair $(A, \Sigma_A)$, where $A$ is a set and $\Sigma_A \subseteq \mathcal{P}(A)$ is a $\sigma$-algebra, i.e. it contains the empty set and it is closed under complements and countable unions.
\end{definition}

\begin{definition}
  A function $f : (A, \Sigma_A) \to (B, \Sigma_B)$ is called measurable if for every $\mathcal{B} \in \Sigma_B$, $f^{-1}(\mathcal{B}) \in \Sigma_A$. 
\end{definition}

\begin{definition}
  Let $(A, \Sigma_A)$ be a measurable space. A probability distribution $(A, \Sigma_A)$ is a function $\mu : \Sigma_A \to [0, 1]$ such that $\mu(\emptyset) = 0$, $\mu(A) = 1$ and $\mu(\uplus_{i\in\nat} A_i) = \sum_{i\in\nat} \mu(A_i)$.
\end{definition}

Given two measurable sets $(A, \Sigma_A)$ and $(B, \Sigma_B)$ it is possible to define a $\sigma$-algebra over $A \times B$ generated by the sets $X \times Y$ which we denote by $\Sigma_A \otimes \Sigma_B$, where $X \in \Sigma_A$ and $Y \in \Sigma_B$. Furthermore, every pair of distributions $\mu_A$ and $\mu_B$ over $A$ and $B$ respectively, can be lifted to a distribution $\mu_A \otimes \mu_B$ over $A\times B$ such that $(\mu_A\otimes\mu_B)(X \times Y) = \mu_A(X)\mu_B(Y)$, for $X \in \Sigma_A$ and $Y \in \Sigma_B$.


\begin{definition}
  Let $(A, \Sigma_A)$ and $(B, \Sigma_B)$ be two measurable spaces. A Markov kernel is a function $f : A \times \Sigma_B \to [0, 1]$ such that

  \begin{itemize}
  \item For every $a \in A$, $f(a, -)$ is a probability distribution.
  \item For every $\mathcal{B} \in \Sigma_B$, $f(-, \mathcal{B})$ is a measurable function.
  \end{itemize}
\end{definition}
  
\begin{definition}
  The category $\cat{Kern}$ has measurable sets as objects and Markov kernels as morphisms. The identity arrow is the function $id_A(a,\mathcal{A}) = 1$ if $a \in \mathcal{A}$ and $0$ otherwise and Composition is given by $(f \circ g)(a, \mathcal{C}) = \int f(-, \mathcal{C}) d(g(a, -))$.
\end{definition}

\subsection*{Markov Categories}
The field of categorical probability was developed in order to get a
more conceptual understanding of Markov kernels. One of its
cornerstone definitions is that of a Markov category which are
categories where objects are abstract sample spaces, morphisms are
abstract Markov kernels and every object has ``contraction'' and
``weakening'' morphisms which correspond to duplicating and
discarding a sample, respectively, without adding any new randomness.

\begin{definition}[Markov category \cite{fritz2020}]
  A Markov category is a semicartesian symmetric monoidal category
  $(\cat{C}, \otimes, 1)$ in which every object $A$ comes equipped
  with a commutative comonoid structure, denoted by
  $\mathsf{copy}_X : X \to X \otimes X$ and
  $\mathsf{delete}_X : X \to 1$, where $\mathsf{copy}$ satisfies

  \[
    \mathsf{copy}_{X \otimes Y} = (id_X \otimes b_{Y, X} \otimes id_Y)
    \circ ( \mathsf{copy}_{X} \otimes \mathsf{copy}_{Y}),
  \]
\end{definition}

where $b_{Y, X}$ is the natural isomorphism
$Y \otimes X \cong X \otimes Y$. The category being semicartesian
means that the monoidal product comes equipped with projection
morphisms $\pi_1 : A \otimes B \to A$ and $\pi_2 : A\otimes B \to B$,
but it is not Cartesian because the equation
$(\pi_1\circ f, \pi_2 \circ f) = f$ does not hold in general which,
intuitively, corresponds to the fact that joint distributions might be
correlated.

\begin{theorem}[\cite{fritz2020}]
  $\cat{CountStoch}$ is a Markov category.
\end{theorem}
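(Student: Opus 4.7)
The plan is to equip $\cat{CountStoch}$ with all the required structure of a Markov category and then check that the axioms reduce to combinatorial identities about diagonal indexing. First I would define a symmetric monoidal product by $A \otimes B = A \times B$ on objects, with $(M \otimes N)((a,b),(a',b')) = M(a,a') \cdot N(b,b')$ on morphisms, and monoidal unit the singleton set $1 = \{*\}$. The associator, unitors, and braiding $b_{Y,X}$ are the permutation matrices induced by the evident bijections in $\cat{Set}$; that these are legitimate transition matrices and satisfy the coherence diagrams follows because matrix multiplication correctly composes the underlying set permutations, reducing each coherence condition to the corresponding identity in $\cat{Set}$.

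Next I would observe that $\cat{CountStoch}$ is semicartesian: every morphism $A \to 1$ is a matrix whose unique column must, by the row-sum condition, consist entirely of $1$s, so $1$ is terminal and $\mathsf{delete}_A$ is forced.

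I would then define $\mathsf{copy}_A : A \to A \otimes A$ as the matrix with entry $\mathsf{copy}_A(a, (a_1,a_2)) = 1$ if $a = a_1 = a_2$ and $0$ otherwise. The comonoid axioms reduce to short computations with sums of Kronecker deltas over diagonals: coassociativity gives on both sides the matrix supported on the triple diagonal $(a,a,a)$; counitality follows because after applying $\mathsf{delete}$ on one factor, summing out that coordinate collapses the diagonal constraint back to the identity; cocommutativity is immediate from the symmetry $a_1 = a_2$ of the diagonal condition.

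Finally I would verify the monoidality equation for $\mathsf{copy}$. Expanding the right-hand side, $(\mathsf{copy}_X \otimes \mathsf{copy}_Y)$ sends $(x,y)$ to $((x_1,x_2),(y_1,y_2))$ with weight $\delta_{x,x_1}\delta_{x,x_2}\delta_{y,y_1}\delta_{y,y_2}$; composing with $id_X \otimes b_{Y,X} \otimes id_Y$ just relabels to $((x_1,y_1),(x_2,y_2))$, leaving the same weight, which matches $\mathsf{copy}_{X \otimes Y}$ evaluated on the corresponding point. None of these steps is difficult, since in the countable setting every composition integral is a sum of products of Kronecker deltas; the only mildly delicate part is keeping the bookkeeping of indices straight across the braiding $b_{Y,X}$, which is why I would handle the monoidality equation last.
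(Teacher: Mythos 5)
Your construction is correct and coincides exactly with the structure the paper itself indicates (the diagonal $\mathsf{copy}$ supported on $(x,x,x)$, the constant-$1$ $\mathsf{delete}$, and terminality of the unit forced by row-stochasticity); the paper gives no proof, deferring to Fritz (2020), and your delta-bookkeeping verification is the standard argument given there. The only routine check you leave implicit is bifunctoriality of $\otimes$ (the interchange law for composed tensor products), which reduces to rearranging nonnegative countable sums and poses no difficulty.
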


The monoidal product is given by the Cartesian product and the monoidal unit is the singleton set. The $\mathsf{copy}_X$ morphism is the matrix $X \times X \times X \to [0, 1]$ which is $1$ in the positions $(x, x, x)$ and $0$ elsewhere, and the $\mathsf{delete}_X$ morphism is the constant $1$ matrix indexed by $X$. 

\begin{theorem}[\cite{fritz2020}]
  $\cat{Kern}$ is a Markov category.
\end{theorem}

This category is the continuous generalization of $\cat{CountStoch}$ and the monoidal product is the Cartesian product with the \emph{product} $\sigma$-algebra and the monoidal unit is the singleton set $\{ * \} $. The $\mathsf{copy}_X$ morphism is the Markov kernel $\mathsf{copy}_X : X \times \Sigma_X \otimes \Sigma_X \to [0, 1]$ such that $\mathsf{copy}_X(x, S \times T) = 1$ if $x \in S \cap T$ and $0$ otherwise. Its $\mathsf{delete}$ morphism is simply the function that given any element in $X$, returns the function which is $1$ on the measurable set $\{ * \} $ and $0$ on the empty measurable set.  

\subsection*{Linear Logic and Monoidal Categories}
 We recall the categorical semantics of the multiplicative fragment of linear logic (MLL):

\begin{definition}[\cite{melliesll}]
  A category $\cat{C}$ is an MLL model if it is symmetric monoidal closed (SMCC), i.e. the functors $A \otimes -$ have a right adjoint $A \lto -$.
\end{definition}

We denote the monoidal product as $\otimes$ and the space of linear maps between objects $X$ and $Y$ as $X \lto Y$, $\mathsf{ev} : ((X \lto Y) \otimes X) \to Y$ is the counit of the monoidal closed adjunction and $\mathsf{cur} : \cat{C}(X \otimes Y, Z) \to \cat{C}(X, Y \lto Z)$ is the linear curryfication map. We use the triple $(\mathcal{C}, \otimes, \lto)$ to denote such models.


\begin{definition}
  Let $(\cat{C}, \otimes_{\cat{C}}, 1_{\cat{C}})$ and $(\cat{D}, \otimes_{\cat{D}}, 1_{\cat{D}})$ be two monoidal categories. We say that a functor $F : \cat{C} \to \cat{D}$ is \emph{lax monoidal} if there is a morphism $\epsilon : 1_{\cat{D}} \to F(1_{\cat{C}})$ and a natural transformation $\mu_{X, Y} : F(X) \otimes_{\cat{D}} F(Y) \to F(X \otimes_{\cat{C}} Y)$ making the diagrams in Figure~\ref{fig:lax} (in Appendix~\ref{app:diagrams}) commute.
\end{definition}
  
If $\epsilon$ and $\mu_{X, Y}$ are isomorphisms we say that $F$ is \emph{strong} monoidal.

One key observation of this paper is that there are many lax monoidal functors between Markov categories and models of linear logic that can interpret probabilistic processes.







\section{Syntax}
\label{sec:syntax}

In this section we will design a syntax that reflects the fact that linearity corresponds to
sampling, not variable usage. We achieve this by making use of a
multi-language semantics that enables the programmer to transport
programs defined in a Markov kernel-centric language (MK) to a linear,
higher-order, language (LL).

Our thesis is that in the context of probabilistic programming, linear
logic, through its connection with linear algebra, departs from its
usual Computer Science applications of enforcing syntactic invariants
and, instead, provides a natural mathematical formalism to express
ideas from probability theory, as shown by Dahlqvist and Kozen
\cite{roban}.

Therefore, since many probabilistic programming constructs, such as
Bayesian inference and Markov kernels, can be naturally interpreted in
linear logic terms, we believe that our calculus allows the user to
benefit from the insights linearity provides to PPL while unburdening
them from worrying about syntactic restrictions by making it possible
to also program using kernels.

We use standard notation from the literature: $\Gamma \vdash t : \tau$
means that the program $t$ has type $\tau$ under context $\Gamma$,
$\subst t x u$ means substitution of $u$ for $x$ in $t$ and
$\subst t {\overrightarrow{x}} {\overrightarrow{u}}$ is the
simultaneous substitution of the term list $\overrightarrow{u}$ for a
variable list $\overrightarrow{x}$ in $t$.

Both languages will be defined in this section and, for presentation's sake,
we are going to use orange to represent MK programs and purple to
represent LL programs. 

\subsection{A Markov Kernel Language}

We need a language to program Markov kernels. Since we
are aiming at generality, we are assuming the least amount of
structure possible. As such we will be working with the internal
language of Markov categories, as presented in Figure~\ref{fig:syntax}
and Figure~\ref{fig:typmk}\footnote{c.f. Appendix~\ref{app:sem}.}. Note that we are implicitly assuming a set
of primitives for the functions $\textcolor{orange}f$.

\begin{figure}[t]
  \begin{align*}
    \textcolor{orange}{
    \tau := 1 \smid \tau \times \tau}
  \end{align*}
  \begin{align*}
    & \textcolor{orange}{M, N} \defined \textcolor{orange}{x} \smid \textcolor{orange}{\mathsf{unit}} \smid \textcolor{orange}{\letin x M N} \smid \textcolor{orange}{(M, N)} \smid \textcolor{orange}{\pi_1 M} \smid \textcolor{orange}{\pi_2 N} \smid \textcolor{orange}{f(M)}
  \end{align*}
  \begin{align*}
    \textcolor{orange}{\Gamma} := \textcolor{orange}{\cdot} \smid \textcolor{orange}{x : \tau}, \textcolor{orange}{\Gamma}
  \end{align*}
  \caption{Syntax MK}
  \label{fig:syntax}
\end{figure}

By construction, every Markov category can interpret this language, as
we show in Figure~\ref{fig:semmk}, with types being interpreted as
\begin{align*}
  \sem{\orange 1} &= 1\\
  \sem{\orange \tau_1 \orange \times \orange \tau_2} &= \sem{\orange\tau_1} \times \sem{\orange\tau_2}
\end{align*}
and the contexts are interpreted using $\times$ over the
interpretation of the types. However, as it stands, it is not very
expressive, since it does not have any probabilistic primitives nor
does it have any interesting types since $1 \times 1 \cong 1$.

When working with concrete models (c.f. Section~\ref{sec:concmod}) we
can extend the language with more expressive types as well as with concrete
probabilistic primitives. For instance, in the context of continuous
probabilities we could add a $\R$ datatype and a
$\cdot \vdash \mathsf{uniform} : \R$ uniform distribution primitive.

Note that even though this language does not have any explicit
sampling operators, this is implicitly achieved by the $\mathsf{let}$
operator. For instance, the program
$\orange{\letin{x}{\mathsf{uniform}}{x + x}}$ samples from a uniform
distribution, binds the result to the variable $x$ and adds the sample
to itself.






\subsection{A Linear Language}

\begin{figure}[t]
  \begin{align*}
    \textcolor{purple}{\tau} := \textcolor{purple}{1} \smid \textcolor{purple}{\tau \lto \tau} \smid \textcolor{purple}{\tau \otimes \tau}
  \end{align*}
  \begin{align*}
    & \textcolor{purple}{t, u} \defined \textcolor{purple}{x} \smid \textcolor{purple}{\mathsf{unit}} \smid \textcolor{purple}{\lamb x t} \smid \textcolor{purple}{\app t u} \smid \textcolor{purple}{t \otimes u} \smid \textcolor{purple}{\letin {x\otimes y} t u}
  \end{align*}
  \begin{align*}
    \textcolor{purple}{\Gamma} := \textcolor{purple}{\cdot} \smid \textcolor{purple}{x : \tau}, \textcolor{purple}{\Gamma}
  \end{align*}
  \caption{Syntax LL}
  \label{fig:syntaxll}
\end{figure}

Our second language is a linear simply-typed $\lambda$-calculus, with the usual typing rules shown in Figure~\ref{fig:typell} in Appendix~\ref{app:sem}, which can
be interpreted in every symmetric monoidal closed category as shown in Figure~\ref{fig:semll}, also in Appendix~\ref{app:sem}, with types interpreted by
\begin{align*}
  \sem{\purple 1} &= 1\\
  \sem{\purple{\underline\tau}_1 \purple \otimes \purple{\underline\tau}_2} &= \sem{\purple{\underline\tau_1}} \otimes \sem{\purple{\underline\tau_2}}\\ 
  \sem{\purple{\underline\tau}_1 \purple \lto \purple{\underline\tau}_2} &= \sem{\purple{\underline\tau_1}} \lto \sem{\purple{\underline\tau_2}} 
\end{align*}
and the contexts are interpreted using $\otimes$ over the interpretation of the types.
Once again, we are aiming at generality instead of expressivity. 
In a concrete setting it would be fairly easy to extend the calculus
with a datatype $\purple \nat$ for natural numbers and probabilistic 
primitives such as $\purple\cdot \vdash \purple {\mathsf{coin}} : \purple \nat $ that flips
a fair coin.

The idea behind the particular linear logic models that we are interested in is that, by integration, Markov kernels can be seen as linear operators between vector spaces of probability distributions. As such, an LL program $\purple x : \purple \nat \vdash_{LL} \purple t : \purple \nat$ will be denoted by a linear function between distributions over the natural numbers. Therefore, from a programming point of view, variables are placeholders for probability distributions, i.e. computations, not values, and sampling occurs when variables are used.

\subsection{Combining Languages}
\label{sec:comblang}

The main drawback of the linear calculus above is that the syntactic
linearity restriction makes it hard to program with it, while the main
drawback of the Markov language is that it does not have
higher-order functions. In this section we will show how we can
combine both language so that we get a calculus with looser linearity
restrictions while still being higher-order.
\begin{figure}
  \begin{align*}
    &\textcolor{orange}{\tau} \defined \textcolor{orange}{1} \smid \textcolor{orange}{\tau \times \tau}\\
    &\textcolor{purple}{\underline{\tau}} \defined \textcolor{purple}{1} \smid \textcolor{purple}{\M} \textcolor{orange}{\tau} \smid \textcolor{purple}{\underline{\tau} \lto \underline{\tau}} \smid \textcolor{purple}{\underline{\tau} \otimes \underline{\tau}}
  \end{align*}
  \begin{align*}
    \textcolor{orange}{M, N} &\defined  \textcolor{orange}{x} \smid \textcolor{orange}{\mathsf{unit}} \smid \textcolor{orange}{\letin x M N} \smid \textcolor{orange}{f(M)} \\
    & \smid \textcolor{orange}{ (M, N)} \smid \textcolor{orange}{\pi_1 M} \smid \textcolor{orange}{\pi_2 M}\\
    \textcolor{purple}{t, u} & \defined \textcolor{purple}{x} \smid \textcolor{purple}{unit} \smid \textcolor{purple}{\lamb x t} \smid \textcolor{purple}{\app t u} \smid \textcolor{purple}{t \otimes u} \smid \textcolor{purple}{\letin {x \otimes y} t u} \\
    & |\, \sample {\textcolor{purple}{t_i}} {\textcolor{orange}{x_i}} {\textcolor{orange}{M}}
  \end{align*}
  \caption{Syntax LL+MK}
  \label{fig:syntaxmix}
\end{figure}

As we will show in Section~\ref{sec:concmod}, when looking at
concrete models for these languages we can see that the semantic
interpretations of variables in both languages are completely
different: in the MK language variables should be thought of as
values, i.e. the values that were sampled from a distribution, whereas
in the LL language, variables of ground type are distributions. In
order to bridge these languages we must use the observation that Markov
kernels --- i.e. open MK terms --- have a natural resource-aware
interpretation of being ``sample-once'' stochastic processes and, by
integration, can be seen as linear maps between measure spaces ---
i.e. open LL terms. The combined syntax for the language is depicted
in Figure~\ref{fig:syntaxmix}.

We now have a language design problem: we want to capture the fact
that every open MK program is, semantically, also an open LL term. The
naive typing rule is:
\begin{mathpar}
  \inferrule{x_1 : \tau_1, \cdots, x_n : \tau_n \vdash_{MK} M : \tau}{x_1 : \M \tau_1, \cdots, x_n : \M \tau_n \vdash_{LL} \mathsf{MK}(M):\M \tau}
\end{mathpar}
The problem with this rule is that it breaks substitution: the
variables in the premise are MK variables whereas the ones in the
conclusion are LL variables.


We solve this problem by making the syntax reflect a common
idiom of PPLs: compute distributions (elements of $\purple \M \orange\tau$),
sample from it and then use the result in a non-linear
continuation. This is captured by the following syntax:
\[
  \sample {\textcolor{purple}{t_1, \cdots, t_{n}}} {\textcolor{orange}{x_1, \cdots, x_{n}}} {\textcolor{orange}{M}}
\]
Note that we are sampling from LL programs $\purple t_i$ (possibly an empty
list), outputting the results to MK variables $\orange x_i$ and binding them
to an MK program $\orange M$. When clear from the context we simply use
$\sample {\purple t_i} {\orange x_i} {\orange M}$. Its corresponding typing rule is:
\begin{mathpar}

  \inferrule[Sample]{\textcolor{orange}{x_1} : \textcolor{orange}{\tau_1} \cdots \textcolor{orange}{x_n} : \textcolor{orange}{\tau_n} \vdash_{MK} \textcolor{orange}{M} : \textcolor{orange}{\tau} \quad \textcolor{purple}{\Gamma_i} \vdash_{LL} \textcolor{purple}{t_i} : \textcolor{purple}{\M} \textcolor{orange}{\tau_i} \\ 0 \leq i < n}{\textcolor{purple}{\Gamma_1}, \cdots, \textcolor{purple}{\Gamma_n} \vdash_{LL} \sample {\textcolor{purple}{t_i}} {x_i} {\textcolor{orange}{M}} : \textcolor{purple}{\M} \textcolor{orange}{\tau}}

\end{mathpar}

As the typing rule suggests, its semantics should be some sort of
composition. However, since we are composing programs that are
interpreted in different categories, we must have a way of translating
MK programs into LL programs --- as we will see in
Section~\ref{sec:catsem} this translation will be functorial. The
operational interpretation of this rule is that we have a set of
distributions $\{ \textcolor{purple}{t_i}\}$ defined using the linear
language --- possibly using higher-order programs --- we sample from
them, bind the samples to the variables $\{\textcolor{orange}{x_i}\}$
in the MK program $\textcolor{orange}{M}$ where there are no linearity
restrictions. Note that the rule above looks very similar to a monadic
composition, though they are semantically different
(cf. Section~\ref{sec:catsem}).

With this new syntax we can finally program in accordance with our new
resource interpretation of linear logic, allowing us to write the
program
\[ \sample {\textcolor{purple}{\coin}} {\textcolor{orange}{x}} {\textcolor{orange}{(x = x)}},\]
which flips a coin once and tests the result for equality with itself, making it equivalent to $\mathsf{true}$.

This combined calculus enjoys the expected syntactic properties\footnote{To avoid visually polluting the proofs we will drop the color code in Theorem~\ref{th:subst} and Theorem~\ref{th:comp}}.

\begin{theorem}
  \label{th:subst}
  Let $\Gamma, x : \underline{\tau_1} \vdash_{LL} t : \underline{\tau}$ and $\Delta \vdash_{LL} u : \underline{\tau_1}$ be well-typed terms, then $\Gamma, \Delta \vdash_{LL} \subst t x u : \underline{\tau}$
\end{theorem}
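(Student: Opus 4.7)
The plan is to prove this by structural induction on the derivation of $\Gamma, x : \underline{\tau_1} \vdash t : \underline{\tau}$. Because the linear type system forces every variable in the context to appear exactly once in the term, the variable $x$ occurs in precisely one subterm of $t$ at every multi-premise rule, which determines where to invoke the induction hypothesis. Throughout, note that $x$ is an LL variable and the MK sublanguage has a disjoint namespace, so LL-substitution never descends into an MK subterm.

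First I would dispatch the base cases. For the Axiom rule $t = x$, the context must be exactly $x : \underline{\tau_1}$, so $\Gamma$ is empty, $\subst{x}{x}{u} = u$, and the conclusion is the assumed judgment $\Delta \vdash u : \underline{\tau_1}$. The Unit rule has an empty context and therefore cannot produce a context of the form $\Gamma, x : \underline{\tau_1}$, so it does not arise.

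Next I would handle the inductive cases for the purely LL constructs (Abstraction, Application, Tensor, LetTensor) in the standard manner for a linear $\lambda$-calculus. For Abstraction $t = \lamb{y}{t'}$, I apply the IH to the body with the extended context and reapply the rule. For the three splitting rules, I use linearity to determine which premise contains $x$, apply the IH to that premise to obtain a derivation over the context extended by $\Delta$ at the appropriate side, and reapply the rule; the other premise is left unchanged. Context ordering is handled up to the implicit exchange that accompanies a linear sequent presentation.

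The only genuinely new case is Sample, $t = \sample{t_i}{x_i}{M}$, typed from premises $\textcolor{npurple}{\Gamma_i} \vdash_{LL} \textcolor{npurple}{t_i} : \textcolor{npurple}{\M}\textcolor{orange}{\tau_i}$ and $\textcolor{orange}{x_1 : \tau_1, \ldots, x_n : \tau_n} \vdash_{MK} \textcolor{orange}{M} : \textcolor{orange}{\tau}$. The LL variable $x$ belongs to exactly one $\textcolor{npurple}{\Gamma_j}$ by linearity, and since $\textcolor{orange}{M}$ lives in the MK fragment with its own binders $\textcolor{orange}{x_i}$, the substitution reduces to $\sample{\ldots, \subst{t_j}{x}{u}, \ldots}{x_i}{M}$. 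Applying the IH to the $j$-th premise yields $\textcolor{npurple}{\Gamma_j}, \Delta \vdash_{LL} \subst{t_j}{x}{u} : \textcolor{npurple}{\M}\textcolor{orange}{\tau_j}$, and then one reinstance of the Sample rule closes the case.

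The main obstacle is conceptual rather than technical: one must be careful that LL-substitution is genuinely a single-language operation that stops at the $\sample{\cdot}{\cdot}{\cdot}$ boundary, treating the MK subterm $M$ as opaque. Once this is observed, the proof is routine and the Sample case is handled by exactly the same linearity argument as Application or Tensor.
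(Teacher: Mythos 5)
Your proof is correct and follows essentially the same route as the paper's: structural induction on the LL typing derivation, using linearity to locate $x$ in exactly one premise of each multi-premise rule, and observing that substitution stops at the $\mathsf{sample}$ boundary so the MK subterm is untouched. Your treatment of the Sample case is in fact slightly more explicit than the paper's, which merely asserts the commutation of substitution with $\mathsf{sample}$.
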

\begin{proof}
The proof can be found in Appendix~\ref{app:proof}.
\end{proof}

The following example illustrates how we can use the MK language to
duplicate and discard linear variables.

\begin{example}
  The program which samples from a distribution $\textcolor{purple}{t}$ and then returns a perfectly correlated pair is given by:
  \[    \cdot \vdash_{LL} \sample {\textcolor{purple}{t}} {\textcolor{orange}{x}} {\textcolor{orange}{(x, x)}} : \textcolor{purple}\M (\textcolor{orange}{\tau \times \tau})\]
  Similarly, the program that samples from a distribution $\textcolor{purple}{t}$ and does not use its sampled value is represented by the term
  \[
    \cdot \vdash_{LL} \sample {\textcolor{purple}{t}} {\textcolor{orange}{x}} {\textcolor{orange}{\mathsf{unit}}} : \textcolor{purple}\M \textcolor{orange}1
  \]
\end{example}

\begin{example}
  Suppose that we have a Markov kernel given by an open MK term $x : \nat \vdash M : \nat$. If we want to encapsulate it as a linear program of type $\M \nat \lto \M \nat$ we can write:
  \[
    \cdot \vdash_{LL} \textcolor{purple}{\lambda \, meas\ldotp} (\sample
    {\textcolor{purple}{meas}} {\textcolor{orange}{x}}
    {\textcolor{orange}{M}}) : \textcolor{purple}\M \textcolor{orange}{\nat} \lto \textcolor{purple}\M \textcolor{orange}{\nat}
  \]
\end{example}

\begin{example}
  \label{ex:roban}
  As we explain in the introduction, Dahlqvist and Kozen must add many
  primitives to their language to work around their linearity
  restrictions. For instance, in order to write projection functions
  $\R^n \to \R^m$, $n > m$ they must add projection primitives to the
  language.

  By having compositional type constructors that can represent joint
  distributions , i.e. $\purple\M(\orange{\tau \times \tau})$, it is possible to write the program
  $ \sample {\purple t} {\orange x} {(\app {\orange {\pi_1}} {\orange x}, \app {\orange {\pi_3}} {\orange x})}$ which samples from
  a distribution over triples and returns only the first and third
  components by only using the syntax of products in MK.
\end{example}
Unfortunately there are some aspects of this language that still are restrictive. For instance, imagine that we want to write an LL program that receives two ``Markov kernels'' $\textcolor{purple}\M \textcolor{orange}\nat \textcolor{purple}\lto \textcolor{purple}\M \textcolor{orange}\nat$ and a distribution over $\nat$ as inputs, samples from the input distribution, feeds the result to the Markov kernels, samples from them and adds the results. Its type would be
\[(\textcolor{purple}\M \textcolor{orange}\nat \textcolor{purple}\lto \textcolor{purple}\M \textcolor{orange}\nat) \textcolor{purple}\lto (\textcolor{purple}\M \textcolor{orange}\nat \textcolor{purple}\lto \textcolor{purple}\M \textcolor{orange}\nat) \textcolor{purple}\lto \textcolor{purple}\M \textcolor{orange}\nat \textcolor{purple}\lto \textcolor{purple}\M \textcolor{orange}\nat\]

Even though the program only requires you to sample once from each
distribution, it is still not possible to write it in the linear
language. 

We will show in Section~\ref{sec:catsem} how the type constructor
$\textcolor{purple}\M$ actually corresponds to an applicative functor
\cite{mcbride2008}, and the limitation above is actually a particular
case of a fundamental difference between programming with applicative
functors compared to programming with monads.

\begin{remark}
  We now have two languages that can interpret probabilistic
  primitives such as $\mathsf{coin}$. However, every primitive
  $\textcolor{orange}M$ in the MK language can be easily transported
  to an LL program by using an empty list of LL programs:
  $\sample \_ \_ {\textcolor{orange}M}$. Therefore it makes sense to
  only add these primitives to the MK language.
\end{remark}

\section{Categorical Semantics}
\label{sec:catsem}

As it is the case with categorical interpretations of languages/logics, types and contexts are interpreted as objects in a category and every well-typed program/proof gives rise to a morphism.

In our case, MK types $\textcolor{orange}{\tau}$ are interpreted as objects $\sem{\textcolor{orange}{\tau}}$ in a Markov category $(\cat{M}, \times)$ and well-typed programs $\textcolor{orange}{\Gamma} \vdash_{MK} \textcolor{orange}{M} : \textcolor{orange}{\tau}$ are interpreted as an $\cat{M}$ morphism $\sem{\textcolor{orange}{\Gamma}} \to \sem{\textcolor{orange}{\tau}}$, as shown in Figure~\ref{fig:semmk}. Similarly, LL types $\textcolor{purple}{\underline{\tau}}$ are interpreted as objects $\sem{\textcolor{purple}{\underline\tau}}$ in a model of linear logic $(\cat{C}, \otimes, \lto)$ and well-typed programs $\textcolor{purple}{\Gamma} \vdash_{LL} \textcolor{purple}{t} : \textcolor{purple}{\underline\tau}$ are interpreted as a $\cat{C}$ morphism $\sem{\textcolor{purple}{\Gamma}} \to \sem{\textcolor{purple}{\underline{\tau}}}$, as shown in Figure~\ref{fig:semll}.



  To give semantics to the combined language is not as straightforward. The sample rule allows the programmer to run LL programs, bind the results to MK variables and use said variables in an MK continuation. The implication of this rule in our formalism is that our semantics should provide a way of translating MK programs into LL programs. In category theory this is usually achieved by a functor $\M : \cat{M} \to \cat{C}$.

  However, we can easily see that functors are not enough to interpret the sample rule. Consider what happens when you apply $\M$ to an MK program $\textcolor{orange}{x} : \orange{\tau}_1, \orange{y} : \orange{\tau}_2 \vdash_{MK} \orange{N} : \orange {\tau}$:
  \[
    \M\sem{\textcolor{orange}{N}} : \M(\tau_1 \otimes \tau_2) \to \M\tau
  \]

  To precompose it with two LL programs outputting $\M \tau_1$ and $\M \tau_2$ we need a mediating morphism $\mu_{\tau_1, \tau_2} : \M\tau_1 \otimes \M\tau_2 \to \M(\tau_1 \times \tau_2)$. Furthermore, if $\textcolor{orange}{N}$ has three or more free variables, there would be several ways of applying $\mu$. Since from a programming standpoint it should not matter how the LL programs are associated, we require that $\mu_{\tau_1, \tau_2}$ makes the lax monoidality diagrams to commute. Therefore, assuming lax monoidality of $\mu$ we can interpret the sample rule:
  


  \[
    \inferrule[Sample]{\tau_1 \times \cdots \times \tau_n \xto{N} \tau \\ \Gamma_i \xto{t_i} \M \tau_i}{\Gamma \xto{t_1 \otimes \cdots \otimes t_n} \M \tau_1 \otimes \cdots \otimes \M \tau_n \xto{\mu} \M (\tau_1 \times \cdots \times \tau_n)\xto{\M N} \M \tau}
  \]

  In case it only has one MK variable, the semantics is given by $\sem{\textcolor{purple}{t}} ; \M\sem{\textcolor{orange}{N}}$ and in case it does not have any free variables the semantics is $\epsilon; \M\sem{\textcolor{orange}{N}}$.

    
    
    

    
    
    



  The equational theory of the LL languages is the well-known
  theory of the simply-typed $\lambda$-calculus and the MK equational
  theory has been described, in graphical notation, by Fritz \cite{fritz2020}.
  Something which is not obvious is understanding
  how they interact at their boundary. This is where $\M$ being a
  functor becomes relevant, since from functoriality it follows the
  two program equivalences:

\begin{theorem}
  Let $\textcolor{purple}{t}$, $\textcolor{orange}{M}$ and $\textcolor{orange}{N}$ be well-typed programs,
  \begin{align*}
    & \sem{\app {(\lamb {\textcolor{purple}{y}} {\sample {\textcolor{purple}{y}} z N})} {(\sample {\textcolor{purple}{t}} x M)}} = \\
    & \sem{\sample {\textcolor{purple}{t}} {\textcolor{orange}{x}} {(\textcolor{orange}{\letin y M N})}}
  \end{align*}
\end{theorem}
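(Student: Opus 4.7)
The plan is to unfold both sides of the equation to their categorical expressions and then reduce them to a common composite by combining the $\beta$-law of the symmetric monoidal closed structure with the functoriality of $\M$.

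First I would compute the denotation of the left-hand side. Write $\sem{t} : \sem{\Gamma_t} \to \M\sem{\tau_1}$, $\sem{M} : \sem{\tau_1} \to \sem{\tau_2}$ and $\sem{N} : \sem{\tau_2} \to \sem{\tau_3}$ for the interpretations of the respective subterms in the Markov category. Applying the Sample rule with one LL premise, $\sem{\sample t x M} = \sem{t};\M\sem{M}$. The Abstraction rule gives $\sem{\lambda y. \sample y z N} = \mathsf{cur}(\M\sem{N})$, and the Application rule then produces the composite $\mathsf{ev} \circ (\mathsf{cur}(\M\sem{N}) \otimes (\sem{t};\M\sem{M}))$. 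By the defining equation $\mathsf{ev} \circ (\mathsf{cur}(f) \otimes g) = f \circ g$ of the monoidal closed adjunction, this collapses to $\sem{t};\M\sem{M};\M\sem{N}$.

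Next I would compute the denotation of the right-hand side. The inner MK term $\letin y M N$ has semantics $\mathsf{copy};(id \times \sem{M});\sem{N}$ by the Let rule of Figure~\ref{fig:semmk}. Since in the LHS $N$ occurs under a binder that exposes only $z$ as its MK free variable, we may assume $N$ does not mention $x$; hence $\sem{N}$ in the extended context $\tau_1 \times \tau_2$ factors as $(\mathsf{delete} \otimes id);\sem{N}$, and the comonoid identity $\mathsf{copy};(\mathsf{delete} \otimes id) = id$ collapses the composite to $\sem{M};\sem{N}$. The outer Sample rule then yields $\sem{\sample t x (\letin y M N)} = \sem{t};\M(\sem{M};\sem{N})$. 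A final appeal to functoriality of $\M$ rewrites $\M(\sem{M};\sem{N})$ as $\M\sem{M};\M\sem{N}$, matching the LHS.

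The main obstacle is merely the bookkeeping of contexts — in particular, verifying that the absence of $x$ among the free variables of $N$ is faithfully reflected in its denotation so that the comonoid identity can be applied, and that the interpretation of the Sample rule for arity one truly avoids invoking $\mu$ so that functoriality of $\M$ alone suffices. Beyond this, no deeper categorical input is needed: the equation is essentially the observation that translating an MK let-binding through the lax monoidal functor $\M$ is the same as performing the corresponding LL sample after eta-expansion, which is the content of functoriality.
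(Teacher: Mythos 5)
Your proof is correct and follows essentially the same route as the paper's: unfold the left-hand side via the $\beta$-law of the closed structure to $\sem{t};\M\sem{M};\M\sem{N}$, unfold the right-hand side via the Sample and Let rules, and connect the two by functoriality of $\M$. The only difference is that you explicitly justify the step $\sem{\letin y M N}=\sem{M};\sem{N}$ using the comonoid counit law and the fact that $x$ is not free in $N$ --- a detail the paper's one-line calculation silently elides --- which is a worthwhile addition rather than a divergence.
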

\begin{proof}
  \begin{align*}
    &\sem{\app {(\lamb {\textcolor{purple}{y}} {\sample {\textcolor{purple}{y}} {\textcolor{orange}{z}} {\textcolor{orange}{N}}})} {(\sample {\textcolor{purple}{t}} {\textcolor{orange}{x}} {\textcolor{orange}{M}})}} = \\
    &\sem{\textcolor{purple}{t}}; \M \sem{\textcolor{orange}{M}}; \M \sem{\textcolor{orange}{N}} = \sem{\textcolor{purple}{t}}; \M(\sem{\textcolor{orange}{M}}; \sem{\textcolor{orange}{N}}) = \\
    &\sem{\sample {\textcolor{purple}{t}} {\textcolor{orange}{x}} {(\textcolor{orange}{\letin y M N}})}
  \end{align*}
\end{proof}

\begin{theorem}
  Let $\textcolor{purple}{t}$ be a well-typed program,
  \[
    \sem{\sample {\textcolor{purple}{t}} {\textcolor{orange}{x}} {\textcolor{orange}{x}}} = \sem{\textcolor{purple}{t}}
  \]
\end{theorem}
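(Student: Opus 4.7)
The plan is to unfold the semantics of the sample construct in the single-variable case and then invoke functoriality of $\M$.

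First I would observe that by the Sample rule given in the semantics discussion, when the MK continuation has a single free variable $\textcolor{orange}{x} : \textcolor{orange}{\tau}$, we have
\[
\sem{\sample{\textcolor{npurple}{t}}{\textcolor{orange}{x}}{\textcolor{orange}{M}}} = \sem{\textcolor{npurple}{t}}; \M\sem{\textcolor{orange}{M}}.
\]
Instantiating with $\textcolor{orange}{M} = \textcolor{orange}{x}$ reduces the problem to showing that $\M\sem{\textcolor{orange}{x}} = id_{\sem{\textcolor{orange}{\tau}}}$.

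Next I would compute $\sem{\textcolor{orange}{x}}$ using the Var rule of the MK semantics in Figure~\ref{fig:semmk}. Here the context is just $\textcolor{orange}{x} : \textcolor{orange}{\tau}$, so the rule gives the morphism $1_{\cat M} \times \sem{\textcolor{orange}{\tau}} \xrightarrow{\mathsf{delete} \times id} 1_{\cat M} \times \sem{\textcolor{orange}{\tau}} \cong \sem{\textcolor{orange}{\tau}}$. Since $\mathsf{delete}$ on the monoidal unit is the identity (the unique endomorphism of $1_{\cat M}$), this morphism is just the left unitor, and up to the implicit identification $1_{\cat M} \otimes X \cong X$ employed throughout the semantics it is the identity on $\sem{\textcolor{orange}{\tau}}$.

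From here the conclusion is immediate: by functoriality of $\M$ we have $\M(id_{\sem{\textcolor{orange}{\tau}}}) = id_{\M\sem{\textcolor{orange}{\tau}}}$, and composing with $\sem{\textcolor{npurple}{t}}$ on the left yields $\sem{\textcolor{npurple}{t}}$. The only minor obstacle is bookkeeping of the unit isomorphism $1_{\cat M} \otimes X \cong X$ that appears in the Var rule, which is routine since the paper treats this isomorphism as an implicit coercion.
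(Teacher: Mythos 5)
Your proposal is correct and follows essentially the same route as the paper: unfold the one-variable case of the Sample rule to $\sem{\textcolor{npurple}{t}}; \M\sem{\textcolor{orange}{x}}$, identify $\sem{\textcolor{orange}{x}}$ with the identity, and conclude by functoriality of $\M$. The only difference is that you spell out the unitor bookkeeping in the Var rule, which the paper elides by writing $\M(\sem{\textcolor{orange}{x}}) = \M(id)$ directly.
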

\begin{proof}
$ \sem{\sample {\textcolor{purple}{t}} {\textcolor{orange}{x}} {\textcolor{orange}{x}}} = \sem{\textcolor{purple}{t}}; \M(\sem {\textcolor{orange}{x}}) = \sem {\textcolor{purple}{t}}; \M(id) = \sem {\textcolor{purple}{t}}; id = \sem{\textcolor{purple}{t}}$
\end{proof}

Furthermore, we also have a modularity property that can be easily proven:

\begin{theorem}
  Let $\textcolor{purple}{t}$, $\textcolor{orange}{M}$ and $\textcolor{orange}{N}$ be well-typed programs. If $\sem{\textcolor{orange}{M}} = \sem{\textcolor{orange}{N}}$ then
  \[
    \sem{\sample {\textcolor{purple}{t}} {\textcolor{orange}{x}} {\textcolor{orange}{M}}} = \sem{\sample {\textcolor{purple}{t}} {\textcolor{orange}{x}} {\textcolor{orange}{N}}}
  \]
\end{theorem}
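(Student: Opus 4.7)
The plan is to unfold the denotational semantics of the $\mathsf{sample}$ construct on both sides and reduce the claim to a simple consequence of the functoriality of $\M$ together with compositionality of the semantic clauses. Since the hypothesis is that $\sem{\textcolor{orange}{M}} = \sem{\textcolor{orange}{N}}$ as morphisms in $\cat{M}$, applying any functor will preserve this equality, and then pre-composing with the same LL morphism preserves it in $\cat{C}$.

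More concretely, first I would expand $\sem{\sample {\textcolor{npurple}{t}} {\textcolor{orange}{x}} {\textcolor{orange}{M}}}$ according to the Sample clause of Figure given in Section~\ref{sec:catsem}. In the single-variable case this yields $\sem{\textcolor{npurple}{t}}\, ;\, \M\sem{\textcolor{orange}{M}}$, and the analogous expansion on the right produces $\sem{\textcolor{npurple}{t}}\, ;\, \M\sem{\textcolor{orange}{N}}$. Then I would use that a functor sends equal morphisms to equal morphisms, so $\sem{\textcolor{orange}{M}} = \sem{\textcolor{orange}{N}}$ implies $\M\sem{\textcolor{orange}{M}} = \M\sem{\textcolor{orange}{N}}$, which after pre-composition with $\sem{\textcolor{npurple}{t}}$ gives the desired equation.

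To be fully general I would also cover the variant of the Sample rule with $n$ free MK variables. In that case the semantics is $(\sem{\textcolor{npurple}{t_1}} \otimes \cdots \otimes \sem{\textcolor{npurple}{t_n}})\, ;\, \mu\, ;\, \M\sem{\textcolor{orange}{M}}$, and replacing $\sem{\textcolor{orange}{M}}$ by the equal morphism $\sem{\textcolor{orange}{N}}$ again changes nothing about the rest of the composite. The degenerate case with no MK free variables ($\epsilon\, ;\, \M\sem{\textcolor{orange}{M}}$) is identical.

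No genuine obstacle is expected: the statement is essentially a congruence lemma, and its content is entirely that the interpretation factors through $\sem{\textcolor{orange}{M}}$ rather than through the syntax of $\textcolor{orange}{M}$. If anything the only thing to be careful about is invoking the right semantic clause (the one for Sample, not a monadic bind), so that the use of functoriality of $\M$ — rather than, say, any Kleisli structure — is what does the work.
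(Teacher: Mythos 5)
Your proof is correct and is exactly the argument the paper has in mind (the paper omits the proof as ``easily proven''): unfold the Sample clause so that the interpretation factors as a fixed prefix followed by $\M\sem{\textcolor{orange}{M}}$, and observe that functors preserve equality of morphisms. Your care in covering the $n$-ary and nullary cases of the rule is appropriate and adds nothing controversial.
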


The expected compositionality of the semantics also holds:

\begin{theorem}
  \label{th:comp}
  Let $x_1 : \tau_1, \cdots, x_n : \tau_n \vdash t : \tau$ and $\Gamma_i \vdash t_i : \tau_i$ be well-typed terms. $\sem{\Gamma_1, \cdots, \Gamma_n \vdash \subst t {\overrightarrow{x_i}} {\overrightarrow{t_i}} : \underline{\tau}} = (\sem{\Gamma_1 \vdash t_1 : \underline{\tau_1}} \otimes \cdots \otimes \sem{\Gamma_n \vdash t_n : \underline{\tau_n }}); \sem{\Gamma_1, \cdots, \Gamma_n } \vdash t : \underline{\tau}$.
\end{theorem}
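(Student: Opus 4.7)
The plan is to proceed by structural induction on the typing derivation of $x_1 : \underline{\tau_1}, \ldots, x_n : \underline{\tau_n} \vdash t : \underline{\tau}$, handling each rule from Figure~\ref{fig:typell} together with the Sample rule. Theorem~\ref{th:subst} already guarantees the resulting term is well-typed, so I only need to check that its denotation matches the claimed composite. I would take for granted the analogous compositionality property for the MK calculus, since that is a standard cartesian substitution lemma whose proof does not involve any of the multilanguage machinery.

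For the Axiom case $t = x_i$, substitution returns $t_i$ verbatim, and the equation holds because all other $\Gamma_j$ contexts tensor in through a $\mathsf{delete}$-style unitor that matches the interpretation of $x_i$ projecting out of $x_1 : \underline{\tau_1}, \ldots, x_n : \underline{\tau_n}$. For Abstraction, Application, Tensor, and LetTensor the strategy is uniform: linearity of the LL calculus induces a unique partition of $\overrightarrow{x_i}$ among the immediate subterms, so I would apply the induction hypothesis to each subterm separately and then glue the resulting morphisms using bifunctoriality of $\otimes$, naturality of $\mathsf{ev}$ and $\mathsf{cur}$, and the standard coherence identities of a symmetric monoidal closed category. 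These are essentially the routine clauses that appear in every compositionality proof for a linear $\lambda$-calculus.

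The Sample case is the one that actually exercises the new structure. A term $\sample {\overrightarrow{t_j'}} {\overrightarrow{y_j}} {\textcolor{orange}{M}}$ has denotation $(\bigotimes_j \sem{t_j'}); \mu; \M\sem{\textcolor{orange}{M}}$. Substitution of $\overrightarrow{t_i}$ for $\overrightarrow{x_i}$ only touches the LL subterms $t_j'$, since the MK body $\textcolor{orange}{M}$ lives in a disjoint namespace and is untouched. I would again partition $\overrightarrow{x_i}$ among the $t_j'$, apply the induction hypothesis to each, and then conclude by bifunctoriality of $\otimes$ combined with the naturality of $\mu$ and functoriality of $\M$, so that precomposition with $\bigotimes_i \sem{t_i}$ commutes past $\mu$ and $\M\sem{\textcolor{orange}{M}}$ in the expected way.

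The main obstacle, as in any linear substitution lemma, is the bookkeeping surrounding context splitting: one must verify that the partition of $\overrightarrow{x_i}$ induced by the syntactic structure of $t$ is compatible with the partitioning of $\Gamma_1, \ldots, \Gamma_n$ appearing in the conclusion, and that the monoidal reassociations needed to align the two sides are precisely those absorbed by the SMCC coherence isomorphisms. To keep the argument legible I would work up to strict associativity and symmetry of $\otimes$, which is the usual convention, so that all the pieces of the inductive step reduce to checking equalities of string diagrams rather than chasing explicit associators.
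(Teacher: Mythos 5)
Your proof follows the same route as the paper's: structural induction on the typing derivation, with the multi-premise cases (Tensor, Application, LetTensor, Sample) discharged by the interchange law for $\otimes$ and the Abstraction case by naturality of $\mathsf{cur}$; note that the Sample case needs only functoriality of $\M$, interchange, and associativity of composition --- naturality of $\mu$ plays no role, since the substituted morphisms are precomposed upstream of $\mu$. One small correction: in the Axiom case the LL context is the singleton $x : \underline{\tau} \vdash x : \underline{\tau}$, so there are no other $\Gamma_j$ to discard --- an SMCC has no $\mathsf{delete}$ --- and the case reduces to $\sem{t_1} ; \mathrm{id} = \sem{t_1}$, exactly as in the paper.
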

\begin{proof}
  The proof can be found in Appendix~\ref{app:proof}.
\end{proof}
  \begin{mathpar}
    \inferrule[Subst]{\Gamma \vdash u_1 : \tau' \\ \Gamma \vdash u_2 : \tau' \\ \Gamma, x : \tau' \vdash t : \tau \\ \Gamma \vdash u_1 \equiv u_2 : \tau'}{\Gamma \vdash \subst t x {u_1} \equiv \subst t x {u_2} : \tau}
  \end{mathpar}
From this theorem we can conclude:

\begin{corollary}
  The Subst rule shown above is sound with respect to the categorical semantics.
\end{corollary}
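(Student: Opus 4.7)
The plan is to derive the corollary directly from Theorem~\ref{th:comp}. The program equivalence $\equiv$ is the relation induced on syntax by denotational equality: $u_1 \equiv u_2$ means $\sem{u_1} = \sem{u_2}$. So it suffices to show the equality of denotations $\sem{\subst t x {u_1}} = \sem{\subst t x {u_2}}$ under the hypothesis $\sem{u_1} = \sem{u_2}$.

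First I would instantiate Theorem~\ref{th:comp} in the single-variable case ($n = 1$). For $\Gamma, x : \underline{\tau_1} \vdash t : \underline{\tau}$ and $\Delta \vdash u_i : \underline{\tau_1}$ with $i \in \{1,2\}$, compositionality gives
\[
  \sem{\Gamma, \Delta \vdash \subst t x {u_i} : \underline{\tau}} = (\mathrm{id}_{\sem{\Gamma}} \otimes \sem{\Delta \vdash u_i : \underline{\tau_1}}) ; \sem{\Gamma, x : \underline{\tau_1} \vdash t : \underline{\tau}}.
\]
Substituting $\sem{u_1} = \sem{u_2}$ into the right-hand side, and using that $\otimes$ is a bifunctor and that composition in the semantic category respects equality of morphisms, the two right-hand sides coincide. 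The claimed equivalence follows immediately.

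There is essentially no obstacle: all the content lives in Theorem~\ref{th:comp}, whose structural induction on $t$ already handled the cases that matter (in particular, linearity-driven splitting of the context in the \textsc{Tensor}, \textsc{Application}, and \textsc{Sample} cases). The only point worth flagging in the write-up is that the rule in Figure~\ref{fig:subst} tacitly assumes $u_1$ and $u_2$ share their context $\Delta$ and type $\underline{\tau_1}$ (otherwise $u_1 \equiv u_2$ is not even well-posed); with this standing convention, both substituted terms live in the same hom-set $\sem{\Gamma, \Delta} \to \sem{\underline\tau}$, so the equation of denotations above makes sense on the nose.
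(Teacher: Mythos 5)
Your proposal is correct and matches the paper's intended argument: the paper offers no separate proof, simply noting that the corollary follows from Theorem~\ref{th:comp}, and your instantiation of compositionality in the single-variable case (padding the remaining context with identity substitutions) together with the observation that $\otimes$ and composition respect equality of morphisms is exactly the derivation being gestured at. Your remark about the tacit well-posedness assumption on $u_1 \equiv u_2$ (same context, same type) is a fair and harmless clarification.
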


Lax monoidal functors, under the name \emph{applicative functors}, are widely used in programming languages research\cite{mcbride2008}. They are often used to define embedded domain-specific languages (eDSL) within a host language. This suggests that from a design perspective the Markov kernel language can be thought of as an eDSL inside a linear language.

We have just shown that $\M$ being lax monoidal is sufficient to give semantics to our combined language, but what would happen if it had even more structure? If it were also full it would be possible to add a reification command\footnote{The proposed rule breaks the substitution theorem, but it is possible to define a variant for it where this is not the case.}:

\[
  \inferrule{\M\Gamma \vdash_{LL} t : \M \tau}{\Gamma \vdash_{MK} \mathsf{reify}(t) : \tau }
\]
 where $\M\Gamma$ is notation for every variable in $\Gamma$ being of the form $\M\tau'$, for some $\tau'$. The semantics for the rule would be taking the inverse image of $\M$. As we will show in the next section, there are some concrete models where $\M$ is full and some other models where it is not. Computationally, fullness of $\M$ can be interpreted as every program of type $\M \tau \lto \M \tau'$ being equal to a Markov kernel.

A property which is easier to satisfy is faithfulness, which is verified by both models in the next section. In this case the translation of the MK language into the LL language would be fully-abstract in the following sense:

\begin{theorem}
  Let $ \textcolor{orange}{x} : \textcolor{orange}{\tau_1} \vdash \textcolor{orange}{M} : \textcolor{orange}{\tau_2}$ and $ \textcolor{orange}{x} : \textcolor{orange}{\tau_1} \vdash\textcolor{orange}{ N} : \textcolor{orange}{\tau_2}$ be two well-typed MK programs. If $\M$ is faithful then $\sem{\sample {\textcolor{purple}{y}} {\textcolor{orange}{x} } \textcolor{orange}{M}} = \sem{\sample {\textcolor{purple}{y}} {\textcolor{orange}{x} } \textcolor{orange}{N}}$ implies $\sem {\textcolor{orange}{M}} = \sem {\textcolor{orange}{N}}$.
\end{theorem}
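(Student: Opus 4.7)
The plan is to unpack the categorical semantics of $\sample{\textcolor{npurple}{x}}{\textcolor{orange}{x}}{\textcolor{orange}{M}}$ using the Sample rule and then apply faithfulness directly. Recall that the Sample rule, when instantiated with a single LL premise $\textcolor{npurple}{\Gamma} \vdash \textcolor{npurple}{t} : \textcolor{npurple}{\M}\textcolor{orange}{\tau_1}$ and a one-variable MK program $\textcolor{orange}{x} : \textcolor{orange}{\tau_1} \vdash \textcolor{orange}{M} : \textcolor{orange}{\tau_2}$, has semantics given by the composite $\sem{\textcolor{npurple}{t}} ; \M\sem{\textcolor{orange}{M}}$, as noted in the paragraph immediately following the Sample rule's categorical interpretation.

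First I would specialise to $\textcolor{npurple}{t} = \textcolor{npurple}{x}$, where the LL context is $\textcolor{npurple}{x} : \textcolor{npurple}{\M}\textcolor{orange}{\tau_1}$. By the Axiom rule of the LL semantics in Figure~\ref{fig:semll}, $\sem{\textcolor{npurple}{x}} = id_{\M\tau_1}$, and hence
\begin{align*}
  \sem{\sample{\textcolor{npurple}{x}}{\textcolor{orange}{x}}{\textcolor{orange}{M}}} &= id_{\M\tau_1}; \M\sem{\textcolor{orange}{M}} = \M\sem{\textcolor{orange}{M}},\\
  \sem{\sample{\textcolor{npurple}{x}}{\textcolor{orange}{x}}{\textcolor{orange}{N}}} &= id_{\M\tau_1}; \M\sem{\textcolor{orange}{N}} = \M\sem{\textcolor{orange}{N}}.
\end{align*}
The assumed equality of the two sample expressions therefore reduces to $\M\sem{\textcolor{orange}{M}} = \M\sem{\textcolor{orange}{N}}$ as morphisms $\sem{\M\tau_1} \to \sem{\M\tau_2}$ in $\cat{C}$.

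Finally, by the assumption that $\M : \cat{M} \to \cat{C}$ is faithful, the equality $\M\sem{\textcolor{orange}{M}} = \M\sem{\textcolor{orange}{N}}$ forces $\sem{\textcolor{orange}{M}} = \sem{\textcolor{orange}{N}}$ in $\cat{M}$, which is the desired conclusion. There is no real obstacle here: the entire argument is a two-line unfolding of the Sample rule's denotation followed by a direct appeal to the definition of faithfulness. The only point worth double-checking is that the concrete Sample rule really does yield exactly $\M\sem{M}$ in the one-variable case (rather than, say, pre-composition with a non-identity $\mu$ or $\epsilon$ component), which is immediate from the specialisation of the general formula since no monoidal gluing is required when there is a single premise.
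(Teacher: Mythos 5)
Your proof is correct and follows exactly the paper's own argument: unfold the Sample rule in the one-variable case to get $id_{\M\tau_1};\M\sem{M} = id_{\M\tau_1};\M\sem{N}$, then apply faithfulness of $\M$. No differences worth noting.
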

\begin{proof}
  $\sem{\sample {\textcolor{purple}{y}} {\textcolor{orange}{x} } \textcolor{orange}{M}} = \sem{\sample {\textcolor{purple}{y}} {\textcolor{orange}{x} } \textcolor{orange}{N}} \implies id_{\textcolor{purple}{\M}\textcolor{orange}{\tau_1}}; \M\sem{\textcolor{orange}{M}} = id_{\textcolor{purple}{\M}\textcolor{orange}{\tau_1}}; \M\sem{\textcolor{orange}{N}} \implies \sem{\textcolor{orange}{M}} = \sem{\textcolor{orange}{N}}$.
\end{proof}

\section{Concrete Models}
\label{sec:concmod}

In this section we show how existing models for both discrete as well as continuous probabilities fit within our formalism.

\subsection{Discrete Probability}

For the sake of simplicity we will denote the monoidal product of $\cat{CountStoch}$ as $\times$.

The probabilistic coherence space model of linear logic has been extensively studied in the context of semantics of discrete probabilistic languages\cite{pcoh}.

\begin{definition}[Probabilistic Coherence Spaces \cite{pcoh}]
  A probabilistic coherence space (PCS) is a pair $(|X|, \Pcal (X))$ where $|X|$ is a countable set and $\Pcal (X) \subseteq |X| \to \R^+$ is a set, called the \emph{web}, such that:

  \begin{itemize}
  \item
    $\forall a \in X\ \exists \varepsilon_a > 0\ \varepsilon_a \cdot
    \delta_a \in \mathcal P (X)$, where $\delta_a(a') = 1$ iff
    $a = a'$ and $0$ otherwise, and we use the notation
    $\varepsilon_a = \varepsilon(a)$;
  \item $\forall a \in X\ \exists \lambda_a\ \forall x \in \mathcal P (X)\ x_a \leq \lambda_a$;
  \item $\mathcal P (X)^{\perp\perp} = \mathcal P (X)$, where $\mathcal P (X)^\perp = \set {x \in X \rightarrow \R^+} {\forall v \in \mathcal P(X)\ \sum_{a \in X}x_av_a \leq 1}$.
  \end{itemize}

\end{definition}

We can define a category $\cat{PCoh}$ where objects are probabilistic coherence spaces and morphisms $X \lto Y$ are matrices $f : |X| \times |Y| \to \R^+$ such that for every $v \in \Pcal{(X)}$, $(f\, v) \in \Pcal{(Y)}$, where $(f \, v)_b = \sum_{a \in |A|}f_{(a,b)}v_a$.

\begin{definition}
  Let $(|X|, \Pcal{(X)})$ and $(|Y|, \Pcal{(Y)})$ be PCS, we define $X \otimes Y = (|X| \times |Y|, \set{x \otimes y}{ x \in \Pcal{(X)}, y \in \Pcal{(Y)}}^{\perp\perp})$, where $(x \otimes y)(a, b) = x(a)y(b)$
\end{definition}

\begin{lemma}
  \label{lem:natpcs}
  Let $X$ be a countable set, the pair $(X, \set{\mu : X \to \R^+}{\sum_{x \in X} \mu(x) \leq 1})$ is a PCS.
\end{lemma}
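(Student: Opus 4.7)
The plan is to verify the three defining conditions of a PCS in order, treating the third one (the biorthogonal closure) as the substantive step. Write $\mathcal P(X) = \set{\mu : X \to \R^+}{\sum_{x \in X} \mu(x) \leq 1}$ throughout.

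First I would dispatch the two easy conditions. For the lower bound, given $a \in X$, choosing $\varepsilon_a = 1$ makes $1 \cdot \delta_a$ a subprobability measure (its sum is exactly $1$), hence $\varepsilon_a \cdot \delta_a \in \mathcal P(X)$. For the upper bound, taking $\lambda_a = 1$ works because any $\mu \in \mathcal P(X)$ satisfies $\mu_a \leq \sum_{x \in X} \mu(x) \leq 1$.

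The main content is the identity $\mathcal P(X)^{\perp\perp} = \mathcal P(X)$. The inclusion $\mathcal P(X) \subseteq \mathcal P(X)^{\perp\perp}$ is automatic from the definition of the polar. For the reverse inclusion, I would first characterize $\mathcal P(X)^{\perp}$ explicitly as $\set{y : X \to \R^+}{\forall a \in X, y_a \leq 1}$: testing against $\delta_a \in \mathcal P(X)$ forces $y_a \leq 1$, and conversely if every $y_a \leq 1$ then $\sum_a y_a v_a \leq \sum_a v_a \leq 1$ for any $v \in \mathcal P(X)$.

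With this characterization in hand, I would then take an arbitrary $x \in \mathcal P(X)^{\perp\perp}$ and show $\sum_{a \in X} x_a \leq 1$. Since $X$ is countable, enumerate $X = \{a_0, a_1, \dots\}$ and let $y^{(n)}$ be the indicator of $\{a_0, \dots, a_{n-1}\}$, which lies in $\mathcal P(X)^{\perp}$ because its entries are $0$ or $1$. Pairing gives $\sum_{i < n} x_{a_i} = \sum_a x_a y^{(n)}_a \leq 1$ for every $n$, and letting $n \to \infty$ (monotone convergence of a nonnegative series) yields $\sum_a x_a \leq 1$, i.e.\ $x \in \mathcal P(X)$. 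The only subtlety I anticipate is the finite-subset truncation argument needed to handle countably infinite $X$; everything else is immediate from the definitions.
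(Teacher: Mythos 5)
Your proposal is correct and follows essentially the same route as the paper: the first two conditions are immediate, the inclusion $\mathcal P(X) \subseteq \mathcal P(X)^{\perp\perp}$ is automatic, and the reverse inclusion is obtained by pairing an element of the biorthogonal against suitable elements of $\mathcal P(X)^{\perp}$. The only (cosmetic) difference is that the paper tests directly against the constant function $\mu_1 \equiv 1$, which it observes lies in $\mathcal P(X)^{\perp}$, so the sum bound $\sum_a \tilde\mu(a) \leq 1$ falls out in one step without your finite-indicator truncation and limit --- the nonnegative series is already well-defined as a supremum of partial sums, so no convergence argument is needed.
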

\begin{proof}
  The first two points are obvious, as the Dirac measure is a subprobability measure and every subprobability measure is bounded above by the constant function $\mu_1(x) = 1$.

  To prove the last point we use the --- easy to prove --- fact that $\Pcal{X} \subseteq \Pcal{X}^{\perp\perp}$. Therefore we must only prove the other direction. First, observe that, if $\mu \in \set{\mu : X \to \R^+}{\sum_{x \in X} \mu(x) \leq 1}$, then we have $\sum \mu(x)\mu_1(x) = \sum 1\mu(x) = \sum \mu(x) \leq 1$, $\mu_1 \in \set{\mu : X \to \R^+}{\sum_{x \in X} \mu(x) \leq 1}^\perp$.

  Let $\tilde{\mu} \in \set{\mu : X \to \R^+}{\sum_{x \in X} \mu(x) \leq 1}^{\perp\perp}$. By definition, $\sum \tilde{\mu}(x) = \sum \tilde{\mu}(x)\mu_1(x) \leq 1$ and, therefore, the third point holds.
\end{proof}

This lemma can be used to give semantics to probabilistic primitives. For instance, a fair coin is interpreted as a function $\coin : \nat \to [0,1]$ which is $.5$ at $0$ and $1$ and $0$ elsewhere and is an element of $\Pcal{(\nat)}$.

\begin{lemma}
  Let $X \to Y$ be a $\cat{CountStoch}$ morphism. It is also a $\cat{PCoh}$ morphism.
\end{lemma}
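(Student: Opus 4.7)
The plan is to unfold the definition of a $\cat{PCoh}$ morphism and verify the defining condition directly, using the PCS structure on countable sets supplied by Lemma~\ref{lem:natpcs}. Concretely, a $\cat{CountStoch}$ morphism is a matrix $f : X \times Y \to [0,1]$ satisfying $\sum_{y \in Y} f(x, y) = 1$ for every $x \in X$, while a $\cat{PCoh}$ morphism $X \lto Y$ is a matrix $g : |X| \times |Y| \to \R^+$ such that for every $v \in \Pcal(X)$, the vector $(g \cdot v) \in \Pcal(Y)$, where $(g \cdot v)_y = \sum_{a \in |X|} g_{(a,y)} v_a$. By Lemma~\ref{lem:natpcs}, both $\Pcal(X)$ and $\Pcal(Y)$ consist exactly of subprobability distributions on $X$ and $Y$ respectively. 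So the entire task reduces to showing that $f$ preserves the property ``$\sum_{a} v_a \leq 1$''.

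First, I would fix an arbitrary $v \in \Pcal(X)$ and compute
\[
  \sum_{y \in Y} (f \cdot v)_y = \sum_{y \in Y} \sum_{x \in X} f(x, y)\, v(x).
\]
Since every term is non-negative, Tonelli's theorem for series with values in $[0, \infty]$ lets me swap the two sums, yielding $\sum_{x \in X} v(x) \sum_{y \in Y} f(x, y)$. Then I apply the row-stochasticity of $f$, namely $\sum_{y \in Y} f(x, y) = 1$, to reduce this to $\sum_{x \in X} v(x)$, which is bounded by $1$ by the assumption $v \in \Pcal(X)$. This shows $(f \cdot v) \in \Pcal(Y)$, so $f$ qualifies as a $\cat{PCoh}$ morphism.

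There is no genuine obstacle here; the only thing to be careful about is the Tonelli step, since $X$ and $Y$ may be countably infinite, but this is standard for non-negative double series. The argument also makes it clear that the statement goes through verbatim if one allows sub-stochastic matrices (replacing $=1$ by $\leq 1$), which is compatible with the way $\Pcal$ uses the inequality $\leq 1$.
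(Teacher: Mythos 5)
Your proof is correct and is exactly the routine verification the paper intends (the paper states this lemma without proof): identify $\Pcal(X)$ and $\Pcal(Y)$ as the subprobability distributions via Lemma~\ref{lem:natpcs}, swap the non-negative double sum, and use row-stochasticity of the transition matrix. The Tonelli step you flag is indeed the only point needing care for countably infinite $X$, $Y$, and your observation that the argument extends to sub-stochastic matrices is consistent with the $\leq 1$ in the definition of the web.
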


\begin{theorem}
  There is a lax monoidal functor $\M : \cat{CountStoch} \to \cat{PCoh}$.
\end{theorem}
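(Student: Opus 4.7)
The plan is to specify $\M$ on objects and morphisms, verify functoriality, provide the lax monoidal data, and then check the coherence diagrams.

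On objects, define $\M X \defined (X, \set{\mu : X \to \R^+}{\sum_{x \in X} \mu(x) \leq 1})$, which is a probabilistic coherence space by Lemma~\ref{lem:natpcs}. On morphisms, given a transition matrix $f : X \to Y$ in $\cat{CountStoch}$, define $\M f$ to be the same matrix; the preceding lemma guarantees that it is in fact a $\cat{PCoh}$-morphism $\M X \lto \M Y$. Functoriality is then immediate because the identity in $\cat{CountStoch}$ is the identity matrix (which is also the identity in $\cat{PCoh}$) and composition in both categories is matrix multiplication.

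For the lax monoidal structure, take $\epsilon : 1_{\cat{PCoh}} \to \M (1_{\cat{CountStoch}})$ to be the identity matrix on the singleton, and define $\mu_{X, Y} : \M X \otimes \M Y \to \M(X \times Y)$ to be the matrix whose $((x,y),(x',y'))$ entry is $1$ if $(x,y) = (x', y')$ and $0$ otherwise. To see that $\mu_{X,Y}$ is a $\cat{PCoh}$-morphism, first observe that for any pure tensor $\nu_1 \otimes \nu_2$ with $\nu_i \in \Pcal(\M X), \Pcal(\M Y)$, we have $\sum_{x,y} \nu_1(x)\nu_2(y) = \bigl(\sum_x \nu_1(x)\bigr)\bigl(\sum_y \nu_2(y)\bigr) \leq 1$, so the constant-$1$ function on $|X| \times |Y|$ belongs to $\set{\nu_1 \otimes \nu_2}{\nu_1 \in \Pcal(\M X), \nu_2 \in \Pcal(\M Y)}^{\perp}$; hence, as in Lemma~\ref{lem:natpcs}, every $v \in \Pcal(\M X \otimes \M Y) = \set{\nu_1 \otimes \nu_2}{\ldots}^{\perp\perp}$ satisfies $\sum_{x,y} v(x,y) \leq 1$, exactly the condition that $\mu_{X,Y}(v) \in \Pcal(\M (X \times Y))$.

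Naturality of $\mu$ follows because the tensor of morphisms in $\cat{PCoh}$ is the Kronecker product of matrices and the monoidal product in $\cat{CountStoch}$ is the same Kronecker product; thus both $\mu_{X', Y'} \circ (\M f \otimes \M g)$ and $\M(f \times g) \circ \mu_{X, Y}$ reduce to the same matrix on $(|X|\times |Y|) \times (|X'| \times |Y'|)$. The associativity and unit coherence diagrams of Figure~\ref{fig:lax} then collapse to the corresponding coherence diagrams of the underlying set-level monoidal structures, which commute trivially because $\mu$ and $\epsilon$ are (essentially) identity matrices.

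The main technical obstacle is the verification that $\mu_{X,Y}$ is well-defined as a $\cat{PCoh}$-morphism, since $\Pcal(\M X \otimes \M Y)$ is specified by a biorthogonal closure rather than directly; the argument above handles this by exhibiting the constant-$1$ function as a witness in the dual. Everything else reduces to straightforward bookkeeping about matrix identities.
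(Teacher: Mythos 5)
Your proposal is correct and takes essentially the same route as the paper: $\M$ is the identity on underlying sets and matrices, with $\epsilon$ and $\mu_{X,Y}$ given by identity matrices, and functoriality and coherence following because everything in sight is an identity. The one place you go beyond the paper's (very terse) proof is the explicit check that $\mu_{X,Y}$ lands in $\Pcal(\M(X\times Y))$ by exhibiting the constant-$1$ function in the dual of the pure tensors; the paper implicitly relies on this and only spells out the (reverse) inclusion in the subsequent lemmas leading to Theorem~\ref{thm:iso}, so your added verification is a genuine and welcome tightening rather than a different approach.
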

\begin{proof}
  The functor is defined using the lemmas above. Functoriality holds due to the functor being the identity on arrows. The lax monoidal structure is given by $\epsilon = id_{1}$ and $\mu_{X, Y} = id_{X \times Y}$
\end{proof}

\begin{lemma}
  If $\mu \in \set{x \otimes y}{ x \in \M (X), y \in \M (Y)}^{\perp}$ then for every $x \in X$ and $y \in Y$, $\mu(x, y) \leq 1$.
\end{lemma}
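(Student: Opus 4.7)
The plan is to exploit the Dirac measures as test functions, since they sit inside $\M(X)$ and $\M(Y)$ and their tensor product singles out individual coordinates of $\mu$.

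First I would fix an arbitrary pair $(a, b) \in X \times Y$ and consider the point masses $\delta_a : X \to \R^+$ and $\delta_b : Y \to \R^+$. Since each is a probability measure on a countable set, $\sum_{x \in X}\delta_a(x) = 1 \leq 1$, hence $\delta_a \in \M(X)$; likewise $\delta_b \in \M(Y)$. Consequently $\delta_a \otimes \delta_b$ belongs to the generating set $\set{x \otimes y}{x \in \M(X), y \in \M(Y)}$. By the definition of $\otimes$ on vectors, $(\delta_a \otimes \delta_b)(a',b') = \delta_a(a')\delta_b(b')$, which vanishes except at $(a',b') = (a,b)$, where it equals $1$.

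Next I unfold the orthogonality hypothesis. By definition of $(-)^\perp$ for PCS, $\mu \in \set{x \otimes y}{x \in \M(X), y \in \M(Y)}^{\perp}$ means
\[
  \sum_{(a',b') \in X \times Y} \mu(a',b')\,(x \otimes y)(a',b') \;\leq\; 1
\]
for every $x \in \M(X)$ and $y \in \M(Y)$. Specializing to $x = \delta_a$ and $y = \delta_b$, every term of this sum except the $(a,b)$-term vanishes, so the inequality collapses to $\mu(a,b) \leq 1$. Since $(a,b)$ was arbitrary, the claim follows.

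There is essentially no obstacle: the only subtle point is checking that $\delta_a \otimes \delta_b$ genuinely lies in the generating family (not merely in its bipolar closure), which is immediate from $\delta_a, \delta_b$ being subprobability distributions by Lemma~\ref{lem:natpcs}. The argument is the standard probabilistic-coherence-space trick of testing orthogonality against Dirac tensors to bound individual matrix entries.
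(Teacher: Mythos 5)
Your proof is correct and follows essentially the same route as the paper's: both test the orthogonality condition against the Dirac tensor $\delta_a \otimes \delta_b$ and observe that the pairing collapses to the single entry $\mu(a,b)$, forcing it to be at most $1$. The only cosmetic difference is that the paper phrases this as a proof by contradiction while you argue directly, and your version is if anything cleaner.
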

\begin{proof}
If there were such indices such that $\mu(x_1, y_1) > 1$ then $\sum\sum \mu(x,y)(\delta_{x_1}\otimes\delta_{y_1})(x,y) > \mu(x_1, y_1) (\delta_{x_1}\otimes\delta_{y_1})(x_1,y_1) = \mu(x_1, y_1) > 1$, which is a contradiction. 
\end{proof}

\begin{lemma}
  Let $X$ and $Y$ be two countable sets, then
  \begin{align*}
    &\M X \otimes \M Y = \left( X \times Y, \set{\mu : X \times Y \to \R^+}{\sum_{x \in X} \sum_{y \in Y}\mu(x, y) \leq 1}\right ) = \\
    &\M(X \times Y).
  \end{align*}
\end{lemma}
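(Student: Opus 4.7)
My plan is to separate the two equalities. The second one, between $(X\times Y, \set{\mu : X \times Y \to \R^+}{\sum_{x,y} \mu(x,y) \leq 1})$ and $\M(X \times Y)$, is immediate from the definition of $\M$ on the countable set $X \times Y$ together with Lemma~\ref{lem:natpcs}. So the substantive work lies in the first equality.

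For the first equality the carrier sets are trivially $X \times Y$ on both sides, so it suffices to compare the webs. Let $T = \set{\mu : X \times Y \to \R^+}{\sum_{x,y}\mu(x,y) \leq 1}$, and let $S = \set{x \otimes y}{x \in \M X,\, y \in \M Y}$, so that by definition the web of $\M X \otimes \M Y$ is $S^{\perp\perp}$. I would then prove $S^{\perp\perp} = T$ by showing both inclusions.

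For $S^{\perp\perp} \subseteq T$, first observe that $S \subseteq T$: if $x \in \M X$ and $y \in \M Y$ are subprobabilities, then $\sum_{a,b} x(a)y(b) = \bigl(\sum_a x(a)\bigr)\bigl(\sum_b y(b)\bigr) \leq 1$. Then, by monotonicity of $(-)^{\perp\perp}$, one gets $S^{\perp\perp} \subseteq T^{\perp\perp}$, and $T^{\perp\perp} = T$ holds by Lemma~\ref{lem:natpcs} applied to the countable set $X \times Y$.

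For the reverse inclusion $T \subseteq S^{\perp\perp}$, I would invoke the lemma immediately preceding the statement, which ensures that every $\nu \in S^{\perp}$ satisfies $\nu(a,b) \leq 1$ pointwise. Then for any $\mu \in T$ and any $\nu \in S^\perp$ one has $\sum_{a,b}\mu(a,b)\nu(a,b) \leq \sum_{a,b}\mu(a,b) \leq 1$, so $\mu \in S^{\perp\perp}$. The only genuinely non-routine step is the pointwise bound on elements of $S^\perp$; everything else is monotonicity of $(-)^{\perp\perp}$ together with the self-biorthogonality guaranteed by Lemma~\ref{lem:natpcs}.
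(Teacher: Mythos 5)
Your proposal is correct and follows essentially the same route as the paper: the paper's own (one-line) proof is exactly your $T \subseteq S^{\perp\perp}$ step, using the preceding pointwise-bound lemma on elements of $S^{\perp}$. Your argument is in fact more complete, since you also spell out the converse inclusion $S^{\perp\perp} \subseteq T$ (via $S \subseteq T$, monotonicity of biorthogonal closure, and $T^{\perp\perp}=T$ from Lemma~\ref{lem:natpcs}), which the paper leaves implicit.
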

\begin{proof}
By the lemma above it follows that if we have a joint probability distribution $\tilde{\mu}$ over $X \times Y$ and an element $\mu \in \set{x \otimes y}{ x \in \M (X), y \in \M (Y)}^{\perp}$ then $\sum \sum \mu(x, y)\tilde{\mu}(x,y) \leq \sum\sum \tilde{\mu}(x, y) \leq 1$.
\end{proof}

\begin{theorem}
  \label{thm:iso}
  Both $\epsilon$ and $\mu_{X, Y}$ are isomorphisms.
\end{theorem}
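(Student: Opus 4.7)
The plan is to reduce the claim to a direct calculation using the lemma immediately preceding the theorem. Recall from the proof of the preceding theorem that $\mu_{X,Y}$ is defined as the identity matrix $id_{X \times Y} : (X \times Y) \times (X \times Y) \to \R^+$, viewed as a $\cat{PCoh}$ morphism from $\M X \otimes \M Y$ to $\M(X \times Y)$, and similarly $\epsilon = id_1$. The preceding lemma establishes that the PCS $\M X \otimes \M Y$ and $\M(X \times Y)$ have not merely isomorphic but literally equal webs: both are $(X \times Y, \{\mu : X \times Y \to \R^+ \mid \sum_{x,y}\mu(x,y) \leq 1\})$.

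Given this, the proof of the theorem reduces to checking that the identity matrix between two PCS with the same web is always an isomorphism in $\cat{PCoh}$. This is immediate: the identity matrix is its own inverse, and it preserves the web trivially. So I would structure the proof of $\mu_{X,Y}$ being an iso as: (i) invoke the preceding lemma to identify the two PCS; (ii) observe that $\mu_{X,Y} = id_{X \times Y}$ is a valid $\cat{PCoh}$ morphism in both directions (with itself as inverse), hence an isomorphism.

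For $\epsilon$, I would unfold the monoidal unit of $\cat{PCoh}$: it is the PCS $(\{\star\}, [0,1])$, and $\M(1)$ with $1$ the singleton in $\cat{CountStoch}$ is by Lemma~\ref{lem:natpcs} the pair $(\{\star\}, \{\mu : \{\star\} \to \R^+ \mid \mu(\star) \leq 1\})$, which is the same PCS. Hence $\epsilon = id_1$ is again its own inverse.

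The main obstacle here is essentially bookkeeping: one needs the preceding lemma in hand and must be careful that the identifications of webs match exactly on the nose (so that the identity matrix genuinely lives in both hom-sets). Once that identification is settled, the isomorphism claim is a one-line argument. If any subtlety arises, it will be in double-checking that the monoidal unit of $\cat{PCoh}$ is precisely as stated (rather than merely isomorphic to) $\M(1)$; if the unit is only isomorphic, the conclusion is unchanged but $\epsilon$ must be read as the canonical comparison map, which is still an iso.
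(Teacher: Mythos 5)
Your proposal is correct and follows essentially the same route as the paper: the paper likewise observes that $\epsilon$ is the identity (hence trivially an isomorphism) and that $\mu_{X,Y}$ being an isomorphism follows directly from the preceding lemma identifying $\M X \otimes \M Y$ with $\M(X \times Y)$ on the nose. Your version merely spells out the bookkeeping (equal webs, identity matrix as its own inverse) that the paper leaves implicit.
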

\begin{proof}
Since $\epsilon$ is the identity morphism, it is trivially an isomorphim. The morphisms $\mu_{X, Y}$ being an isomorphism is a direct consequence of the lemmas above.  
\end{proof}

\begin{theorem}
  \label{thm:full}
  The functor $\M$ is full.
\end{theorem}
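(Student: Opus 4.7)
The plan is to exploit the fact, recorded in the construction of $\M$, that $\M$ acts as the identity on arrows. Fullness then reduces to checking that every $\cat{PCoh}$ morphism $f : \M X \lto \M Y$, viewed as a matrix $f : |X| \times |Y| \to \R^+$, is itself a morphism of $\cat{CountStoch}$, and hence equals $\M g$ for $g := f$.

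First I would unpack what it means for $f$ to be a $\cat{PCoh}$ morphism out of $\M X$. By Lemma~\ref{lem:natpcs}, $\Pcal(\M X)$ consists exactly of the subprobability measures $v$ on $X$, and similarly for $\M Y$. So the defining condition becomes: for every subprobability $v$ on $X$, the vector $(fv)_y = \sum_{x \in |X|} f_{x,y} v_x$ is again a subprobability on $Y$, i.e.\ $\sum_y \sum_x f_{x,y} v_x \leq 1$. Specializing $v$ to the Dirac measures $\delta_x$ (legitimate elements of $\Pcal(\M X)$) immediately yields $\sum_y f_{x,y} \leq 1$ for every $x \in |X|$. Hence each row of $f$ is a subprobability measure on $Y$, which is precisely the data of a transition matrix, i.e.\ of a $\cat{CountStoch}$ morphism $g : X \to Y$.

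The remaining step is just a consistency check: since $\M g = g$ on arrows, we have $\M g = f$, so $f$ lies in the image of $\M$ on hom-sets, proving fullness. The only mild subtlety — not really an obstacle — is that the Dirac test only gives the row-wise condition; one should note that the row-wise condition is also sufficient for $f$ to preserve $\Pcal(\M X) \subseteq \Pcal(\M Y)$, which follows from a Tonelli-style interchange of non-negative sums $\sum_y \sum_x f_{x,y} v_x = \sum_x v_x \sum_y f_{x,y} \leq \sum_x v_x \leq 1$, showing the equivalence of the two formulations and confirming that nothing more than row-subprobability is being asked of $f$.
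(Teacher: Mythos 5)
The paper does not actually print a proof of Theorem~\ref{thm:full}, so there is nothing to compare your argument against; judged on its own terms, it has a genuine gap at the final identification step. The Dirac test gives only $\sum_{y} f_{x,y} \leq 1$ for each $x$, i.e.\ each row of $f$ is a \emph{sub}probability measure. But a morphism of $\cat{CountStoch}$ is a transition matrix whose rows are genuine probability distributions, $\sum_y g_{x,y} = 1$: this normalization is forced by $\cat{CountStoch}$ being a Markov category, hence semicartesian, so that the unique map $X \to 1$ is the all-ones column. Consequently ``row-subprobability'' is strictly weaker than ``the data of a $\cat{CountStoch}$ morphism,'' and the sentence asserting that the two coincide is where the proof fails. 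Concretely, the zero matrix, or $\tfrac{1}{2}\, id$, is a perfectly good $\cat{PCoh}$ morphism $\M X \lto \M Y$ --- it sends every subprobability measure to a subprobability measure --- yet it is not $\M g$ for any transition matrix $g$. What your argument actually establishes is that every $\cat{PCoh}$ morphism $\M X \lto \M Y$ is a \emph{substochastic} matrix, i.e.\ fullness onto a category of substochastic kernels, not onto $\cat{CountStoch}$ as the paper defines it.

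The parts of the argument that do work are worth keeping: the use of Lemma~\ref{lem:natpcs} to identify $\Pcal(\M X)$ with the subprobability measures, the Dirac-measure test, and the Tonelli interchange showing that the row-wise bound is equivalent to web preservation are all correct. The defect is solely the equivocation between $\leq 1$ and $=1$. To close it you would need either to restrict to the mass-preserving $\cat{PCoh}$ morphisms (those $f$ with $\sum_y (f v)_y = \sum_x v_x$ for all $v$) and argue that these are exactly the morphisms out of $\M X$ --- which is false, as the counterexamples above show --- or to read the theorem as being about an enlarged, substochastic version of the source category. As stated, the theorem is not established by this argument, and the argument itself suggests it cannot be established without amending the statement.
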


Both results above can be directly used to enhance the syntax of the combined language. From Theorem~\ref{thm:iso} we can conclude that elements of type $\M(\tau_1 \times \tau_2)$, by projecting their marginal distributions, can be manipulated as if they had type $\M\tau_1 \otimes \M \tau_2$. Something to note is that when we do this marginalization process we lose potential correlations between the elements of the pair.

\subsection{Continuous Probability}


In order to accommodate continuous distributions we can use regularly ordered Banach spaces, whose detailed definition goes beyond the scope of this paper.

\begin{definition}[\cite{roban}]
  The category $\cat{RoBan}$ has regularly ordered Banach spaces as objects and regular linear functions as morphisms.
\end{definition}

\begin{theorem}
\label{th:laxcont}
  There is a lax monoidal functor $\M : \cat{Kern} \to \cat{RoBan}$.
\end{theorem}
\begin{proof}
  The functor acts on objects by sending a measurable space to the set of signed measures over it, which can be equipped with a $\cat{RoBan}$ structure. On morphisms it sends a Markov kernel $f$ to the linear function $\M( f)(\mu) = \int f d \mu$.

  The monoidal structure of $\cat{RoBan}$ satisfies the universal property of tensor products and, therefore, we can define the natural transformation $\mu_{X, Y} : \M(X) \otimes \M(Y) \to \M (X \times Y)$ as the function generated by the bilinear function $ \M(X); \M(Y) \lto \M (X \times Y)$ which maps a pair of distributions to its product measure. The map $\epsilon$ is, once again, equal to the identity function.

  The commutativity of the lax monoidality diagrams follows from the universal property of the tensor product: it suffices to verify it for elements $\mu_A \otimes \mu_B \otimes \mu_C$. 
\end{proof}

In $\cat{RoBan}$ the uniform distribution over the interval $[0,1]$ is an element of $\M \R$, meaning that it can soundly interpret a $\cdot \vdash_{LL}\mathsf{uniform} : \M \R$ primitive.

Even though $\M$ looks very similar to the discrete case, it follows from a well-known theorem from functional analysis that the functor is \emph{not} strong monoidal, meaning that there are joint probability distributions (elements of $\M(A \times B)$) that cannot be represented as an element of the tensor product $\M(A) \otimes \M(B)$ and, as such, programs of type $M(A \times B)$ must be manipulated in MK language, as shown in Example~\ref{ex:roban}.

\section{Beyond Probability}
\label{sec:commutative}
We have seen that this new resource interpretation is present in different models of linear logic models for probabilistic programming. In this section we show that this model can be generalized to commutative effects, i.e. effects where the program equation Commutativity below holds. Categorically, these effects are captured by monoidal monads\footnote{Monoidal monads are equivalent to commutative monads, which is the nomenclature usually used in the context of programming languages semantics.}. Due to length issues, we will not fully detail the definition of monoidal monads, but we suggest the interested reader to read Seal \cite{seal2013}.
\[
\inferrule[Commutativity]{\Gamma \vdash t_1 : \tau_1 \\ \Gamma \vdash t_2 :\tau_2 \\ \Gamma, x : \tau_1, y : \tau_2 \vdash u : \tau}{\letin{x_1}{t_1}{(\letin{x_2}{t_2}{u})} \equiv \letin{x_2}{t_2}{(\letin{x_1}{t_1}{u})} : \tau}
\]
\begin{definition}[\cite{seal2013}]
Let $(\cat{C}, \otimes, I)$ be a monoidal category and $(T, \eta,\mu)$ a monad over it. The monad $T$ is called monoidal if it comes equipped with a natural transformation $\kappa_{X, Y} : T X \otimes T Y \to T(X \otimes Y)$ making certain diagrams commute
\end{definition}

For probability monads the transformation $\kappa$ corresponds to forming the product probability distribution and, more generally, this can be thought of a program that runs both of its (effectful) inputs and pairs the outputs. 

Every monad give rise to the interesting categories $\cat{C}_T$ and $\cat{C}^T$ which are, respectively, the Kleisli category and Eilenberg-Moore category. The objects of $\cat{C}_T$ are the same as $\cat{C}$ and morphisms between $A$ and $B$ are $\cat{C}$ morphisms $A \to T B$, with the identity morphism being equal to the unit $\eta$ of the monad and composition is given by $f; g = f; T g ; \mu$.

The objects of the category $\cat{C}^T$ are pairs $(X, x)$, where $X$ is a $\cat{C}$ object and $x : T X \to X$ is a $\cat{C}$ morphism such that $\mu; x = T x ; x$ and $\eta; x = id_X$, and morphisms between objects $(X, x)$ and $(Y, y)$ are $\cat{C}$ morphisms $f : X \to Y$ such that $x ; f = T f ; y$.

For every monad $T$ there is a canonical inclusion functor $\iota : \cat{C}_T \to \cat{C}^T$ which maps $X$ to $(TX, \mu)$ and $f : X \to Y$ to $Tf ; \mu_Y$.

\begin{theorem}[\cite{borceux1994}]
The functor $\iota$ is full and faithful.
\end{theorem}

As we explain in Appendix~\ref{app:monad}, assuming enough structure
on the category $\cat{C}$ we can show that the triple
$(\cat{C}_T, \cat{C}^T, \iota)$ is a model to the MK+LL language and
we can bring our new resource interpretation of linear logic to other
commmutative effects.

An illustrative example is the powerset monad
$\mathcal{P} : \cat{Set} \to \cat{Set}$ which is monoidal and since
$\cat{Set}$ has the necessary structure, the triple
$(\cat{C}_{\mathcal{P}}, \cat{C}^{\mathcal{P}}, \mathcal{P})$ is a
model to our language and can be used to give semantics to
non-deterministic computation.

In the context of commutative effects other than randomness, the
syntax $\sample t x M$ does not make as much sense, in which case we
can use the syntax
$\mathsf{observe}\ t_i \ \mathsf{as } \ x_i \ \mathsf{in}\ M$
instead. Once again, operationally, the programs $t_i$ are fully
executed, the values are bound to $x_i$ in $M$ which is then executed.

Furthermore, other effects have other relevant effectful operations
and, therefore, we can assume that there is a set of operations in the
MK language that are interpreted in the Kleisli category and can be
transported to LL using $\mathsf{observe}$, similar to how it was done
in the probabilistic case.

For the non-deterministic case we can
assume the existence of typing rules for non-deterministic choice and
failure:
\begin{mathpar}
\inferrule[Choice]{\Gamma \vdash_{MK} t_1 : \tau \\ \Gamma \vdash_{MK} t_2 : \tau}{\Gamma \vdash_{MK} t_1 \oplus t_2 : \tau}
\and
\inferrule[Null]{~}{\Gamma \vdash_{MK} 0_{\tau} : \tau}
\end{mathpar}
satisfying the expected equations and interpreted using set-theoretic union and the empty set, respectively.


A similar connection between linear logic and monoidal monads has been
made by Benton and Wadler\cite{benton1996}, where they want to relate
Moggi's monadic $\lambda$-calculus with linear logic by showing
that if a monad is monoidal and the category has equalizers and
coequalizers, then the Eillenberg-Moore category is a model of linear
logic.

\section{Related Work}


\subsubsection*{Semantics of Probabilistic Programming}

Ehrhard et al. \cite{stablecones,lincone} have defined a model of linear logic $\cat{CLin}$ which can be used to interpret a higher-order probabilistic programming language. They have used the call-by-name translation of intuitionistic logic into linear logic $A \to B = !A \multimap B$ to give semantics to their language. The authors extend their language with a call-by-value $\mathsf{let}$ syntax which makes it possible to reuse sampled values. In order to give semantics to this new language they introduce a new category $\cat{CLin_m}$ which can interpret this new operator, at the cost of complicating their model. 

Because there is an analogous proof of Theorem~\ref{th:laxcont} with the category $\cat{CLin}$ replacing $\cat{RoBan}$, we can use their original, simpler, model to interpret our language, while not needing to use the linear logic exponential to interpret non-linear programs.

Dahlqvist and Kozen \cite{roban} have defined a category of partially ordered Banach spaces and shown that it is a model of intuitionistic linear logic. An important difference from their approach and the one mentioned above is that they embrace variable linearity as part of their syntax. As we argued in this paper, we believe that the syntactic restriction of linearity they have used is not adequate for the purposes of probabilistic programming. They deal with this limitation by adding primitives to their languages which, by using the results of Section~\ref{sec:concmod}, could be programmed using the MK language.

Quasi Borel spaces \cite{qbs} are a conservative extension of
$\cat{Meas}$ that are Cartesian closed and have a commutative
probability monad. The drawback of this model is that it is still not
as well understood as its measure-theoretic counterpart, and there are
theorems from probability theory used to reason about programs that
may not hold in the category of quasi Borel spaces $\cat{QBS}$.

Recently, Geoffroy \cite{geoffroy2021} has made progress in connecting
linear logic and quasi Borel Spaces by showing that a certain
subcategory of the Eillenberg-Moore category for the probability monad
in $\cat{QBS}$ is a model of classical linear logic, which we see as
an instance of our model where the MK language can have higher-order
functions as well.

\subsubsection*{Call-by-Push-Value}
The idea of having two distinct type systems that are connected by a
functorial layer is reminiscent of Call-by-Push-Value (CBPV)
\cite{levy2001}, which has a type system for values and a type system
for computations that are connected by an adjunction. In recent work,
Ehrhard and Tasson \cite{tasson2019} use the Eilenberg-Moore
adjunction of the linear logic exponential $!$ to give semantics to a
calculus that can interpret lazy and eager probabilistic computation,
allowing for the interpretation of an eager $\mathsf{let}$ operator
which is operationally similar to our $\mathsf{sample}$
construct. However, the existence of the $\mathsf{let}$ operator
depends on properties of the $!$ that are unknown to hold for
continuous distributions, while our semantics can naturally deal with
continuous distributions as we have shown in
Section~\ref{sec:concmod}.

Furthermore, the exponential which lies at the center of their approach is, semantically, hard to work with and does not have any clear connections to probability theory, making it unlikely that their semantics can be seen as a bridge between the Markov and linear semantics, which is the case for the models presented in Section~\ref{sec:concmod}.

Goubault-Larrecq \cite{goubault2019} has defined a CBPV domain semantics to a language that mixes probability and non-determinism, a long-standing challenge in the theory of programming languages. His focus is in understanding how to make probability interact with non-determinism in a sound way. He studies the full-abstraction of his semantics but does not deal with connections to linear logic.

\subsubsection{Acknowledgements}

The support of the National Science Foundation under grant CCF-2008083
is gratefully acknowledged. I would also like to thank Arthur Azevedo
de Amorim, Justin Hsu, Michael Roberts, Christopher Lam and Deepak
Garg for their useful comments on earlier versions of this paper.

 \bibliographystyle{splncs04}
\bibliography{mybib}

\begin{thebibliography}{10}
\providecommand{\url}[1]{\texttt{#1}}
\providecommand{\urlprefix}{URL }
\providecommand{\doi}[1]{https://doi.org/#1}

\bibitem{azevedo2019}
de~Amorim, A.A., Gaboardi, M., Hsu, J., Katsumata, S.y.: Probabilistic
  relational reasoning via metrics. In: Symposium on Logic in Computer Science
  (LICS) (2019)

\bibitem{aumann1961}
Aumann, R.J.: Borel structures for function spaces. Illinois Journal of
  Mathematics  (1961)

\bibitem{barthe2014}
Barthe, G., Fournet, C., Gr{\'e}goire, B., Strub, P.Y., Swamy, N.,
  Zanella-B{\'e}guelin, S.: Probabilistic relational verification for
  cryptographic implementations. In: Principles of Programming Languages (POPL)
  (2014)

\bibitem{benton1996}
Benton, N., Wadler, P.: Linear logic, monads and the lambda calculus. In:
  Symposium on Logic in Computer Science (LICS) (1996)

\bibitem{borceux1994}
Borceux, F.: Handbook of Categorical Algebra: Volume 2, Categories and
  Structures, vol.~2. Cambridge University Press (1994)

\bibitem{chakarov2013}
Chakarov, A., Sankaranarayanan, S.: Probabilistic program analysis with
  martingales. In: International Conference on Computer Aided Verification
  (CAV) (2013)

\bibitem{clerc2017}
Clerc, F., Danos, V., Dahlqvist, F., Garnier, I.: Pointless learning. In:
  International Conference on Foundations of Software Science and Computation
  Structures (FoSSaCS) (2017)

\bibitem{roban}
Dahlqvist, F., Kozen, D.: Semantics of higher-order probabilistic programs with
  conditioning. In: Principles of Programming Languages (POPL) (2019)

\bibitem{pcoh}
Danos, V., Ehrhard, T.: Probabilistic coherence spaces as a model of
  higher-order probabilistic computation. Information and Computation
  \textbf{209}(6),  966--991 (2011)

\bibitem{lincone}
Ehrhard, T.: On the linear structure of cones. In: Logic in Computer Science
  (LICS) (2020)

\bibitem{stablecones}
Ehrhard, T., Pagani, M., Tasson, C.: Measurable cones and stable, measurable
  functions: a model for probabilistic higher-order programming. In: Principles
  of Programming Languages (POPL) (2017)

\bibitem{fritz2020}
Fritz, T.: A synthetic approach to markov kernels, conditional independence and
  theorems on sufficient statistics. Advances in Mathematics  \textbf{370},
  107239 (2020)

\bibitem{geoffroy2021}
Geoffroy, G.: Extensional denotational semantics of higher-order probabilistic
  programs, beyond the discrete case (unpublished) (2021)

\bibitem{goubault2019}
Goubault-Larrecq, J.: A probabilistic and non-deterministic call-by-push-value
  language. In: Logic in Computer Science (LICS) (2019)

\bibitem{qbs}
Heunen, C., Kammar, O., Staton, S., Yang, H.: A convenient category for
  higher-order probability theory. In: Logic in Computer Science (LICS) (2017)

\bibitem{huang2019}
Huang, M., Fu, H., Chatterjee, K., Goharshady, A.K.: Modular verification for
  almost-sure termination of probabilistic programs. Proceedings of the ACM on
  Programming Languages (OOPSLA) (2019)

\bibitem{levy2001}
Levy, P.B.: Call-by-push-value. Ph.D. thesis (2001)

\bibitem{maraist1999}
Maraist, J., Odersky, M., Turner, D.N., Wadler, P.: Call-by-name,
  call-by-value, call-by-need and the linear lambda calculus. Theoretical
  Computer Science  (1999)

\bibitem{mcbride2008}
McBride, C., Paterson, R.: Applicative programming with effects. Journal of
  functional programming  \textbf{18}(1),  1--13 (2008)

\bibitem{mciver2017}
McIver, A., Morgan, C., Kaminski, B.L., Katoen, J.P.: A new proof rule for
  almost-sure termination. Proceedings of the ACM on Programming Languages
  (POPL) (2017)

\bibitem{melliesll}
Mellies, P.A.: Categorical semantics of linear logic. Panoramas et syntheses
  \textbf{27},  15--215 (2009)

\bibitem{scibior2018}
Scibior, A., Kammar, O., Vakar, M., Staton, S., Yang, H., Cai, Y., Ostermann,
  K., Moss, S., Heunen, C., Ghahramani, Z.: Denotational validation of
  higher-order bayesian inference. Proceedings of the ACM on Programming
  Languages  (2018)

\bibitem{seal2013}
Seal, G.J.: Tensors, monads and actions. arXiv preprint arXiv:1205.0101  (2012)

\bibitem{tasson2019}
Tasson, C., Ehrhard, T.: Probabilistic call by push value. Logical Methods in
  Computer Science  (2019)

\end{thebibliography}

\appendix

\section{Typing Rules and Denotational Semantics LL and MK}
\label{app:sem}

\begin{figure}[H]
\begin{mathpar}

  \inferrule[Var]{ }{\textcolor{orange}{\Gamma}, \textcolor{orange}{x} : \textcolor{orange}{\tau} \vdash \textcolor{orange}{x} : \textcolor{orange}{\tau}}

  \and
  
  \inferrule[Unit]{ }{\textcolor{orange}{\Gamma} \vdash \textcolor{orange}{\mathsf{unit}} : \textcolor{orange}{1}}

  \and
  
  \inferrule[Let]{\textcolor{orange}{\Gamma} \vdash \textcolor{orange}{M} : \textcolor{orange}{\tau_1} \\ \textcolor{orange}{\Gamma} , \textcolor{orange}{x} : \textcolor{orange}{\tau_1} \vdash \textcolor{orange}{N} : \textcolor{orange}{\tau}}{\textcolor{orange}{\Gamma} \vdash \textcolor{orange}{\letin x M N} : \textcolor{orange}{\tau}}

  \and

  \inferrule[Primitive]{\textcolor{orange}\Gamma \vdash \textcolor{orange}M : \textcolor{orange}{\tau_1} \\ \textcolor{orange}f :\textcolor{orange}{\tau_1} \to \textcolor{orange}{\tau_2}}{\textcolor{orange}\Gamma \vdash \textcolor{orange}{f(M)} : \textcolor{orange}{\tau_2}}

  \and

  \inferrule[Pair]{\textcolor{orange}{\Gamma} \vdash \textcolor{orange}{M} : \textcolor{orange}{\tau_1} \\ \textcolor{orange}{\Gamma} \vdash \textcolor{orange}{N} : \textcolor{orange}{\tau_2}}{\textcolor{orange}{\Gamma} \vdash \textcolor{orange}{(M, N)} : \textcolor{orange}{\tau_1 \times \tau_2}}

  \and

  \inferrule[Proj1]{\textcolor{orange}{\Gamma} \vdash \textcolor{orange}{M} : \textcolor{orange}{\tau_1 \times \tau_2}}{ \textcolor{orange}{\Gamma} \vdash \textcolor{orange}{\pi_1 M} : \textcolor{orange}{\tau_1}}
  
  \and

  \inferrule[Proj2]{\textcolor{orange}{\Gamma} \vdash \textcolor{orange}{M} : \textcolor{orange}{\tau_1 \times \tau_2}}{ \textcolor{orange}{\Gamma} \vdash \textcolor{orange}{\pi_2 M} : \textcolor{orange}{\tau_2}}

\end{mathpar}
\caption{Typing rules MK}
\label{fig:typmk}
\end{figure}

\begin{figure}
\begin{typing}
  \inferrule[Axiom]{ }{\textcolor{purple}{x} : \textcolor{purple}{\tau} \vdash \textcolor{purple}{x} : \textcolor{purple}{\tau}}

  \and

  \inferrule[Unit]{ }{\textcolor{purple}{\cdot} \vdash \textcolor{purple}{\mathsf{unit}} : \textcolor{purple}{1}}

  \and

  \inferrule[Abstraction]{\textcolor{purple}{\Gamma}, \textcolor{purple}{x} : \textcolor{purple}{\tau_1} \vdash \textcolor{purple}{t} : \textcolor{purple}{\tau_2}}{\textcolor{purple}{\Gamma} \vdash \textcolor{purple}{\lamb x t} : \textcolor{purple}{\tau_1 \lto \tau_2}}

  \and

  \inferrule[Application]{\textcolor{purple}{\Gamma_1} \vdash \textcolor{purple}{t} : \textcolor{purple}{\tau_1 \lto \tau_2} \\ \textcolor{purple}{\Gamma_2} \vdash \textcolor{purple}{u} : \textcolor{purple}{\tau_1}}{\textcolor{purple}{\Gamma_1}, \textcolor{purple}{\Gamma_2} \vdash \textcolor{purple}{\app t u} : \textcolor{purple}{\tau_2}}

  \and

  \inferrule[Tensor]{\textcolor{purple}{\Gamma_1} \vdash \textcolor{purple}{t} : \textcolor{purple}{\tau_1} \\ \textcolor{purple}{\Gamma_2} \vdash \textcolor{purple}{u} : \textcolor{purple}{\tau_2}}{\textcolor{purple}{\Gamma_1}, \textcolor{purple}{\Gamma_2} \vdash \textcolor{purple}{t \otimes u} : \textcolor{purple}{\tau_1 \otimes \tau_2}}

  \and

  \inferrule[LetTensor]{\textcolor{purple}{\Gamma_1}  \vdash \textcolor{purple}{t} : \textcolor{purple}{\tau_1 \otimes \tau_2} \\ \textcolor{purple}{\Gamma_2}, \textcolor{purple}{x} : \textcolor{purple}{\tau_1}, \textcolor{purple}{y} : \textcolor{purple}{\tau_2} \vdash \textcolor{purple}{u} : \textcolor{purple}{\tau}}{\textcolor{purple}{\Gamma_1}, \textcolor{purple}{ \Gamma_2} \vdash \textcolor{purple}{\letin {x\otimes y} t u} : \textcolor{purple}{\tau}}

\end{typing}
  \caption{Typing rules LL}
  \label{fig:typell}
\end{figure}

  \begin{figure}[H]

    \[
    \begin{tabular}{c}
      \begin{mathpar}

        \inferrule[Var]{ }{\Gamma \times \tau \xrightarrow{\mathsf{delete}\times id_\tau} 1 \times \tau \cong \tau}

        \and

        \inferrule[Pair]{\Gamma \xrightarrow{M} \tau_1 \\ \Gamma \xrightarrow{N} \tau_2}{\Gamma \xrightarrow{\mathsf{copy}} \Gamma \times \Gamma \xrightarrow{M \times N} \tau_1 \times \tau_2}

        \and

        \inferrule[Proj]{\Gamma \xrightarrow{M} \tau_1 \times \tau_2}{\Gamma \xrightarrow{M; (id_{\tau_1} \times \mathsf{delete})} \tau_1 \times 1 \cong \tau_1}

        \and

        \inferrule[Let]{\Gamma \xto{M} \tau_1 \\ \Gamma \times  \tau_1 \xto{N} \tau}{\Gamma \xto{\mathsf{copy}} \Gamma \times \Gamma \xto{(id_\Gamma \times M); N} \tau}

        \and

        \inferrule[Primitive]{\Gamma \xto{M} \tau_1 \\ \tau_1 \xto{f} \tau_2}{\Gamma \xto{M} \tau_1 \xto{f} \tau_2}

      \end{mathpar}
    \end{tabular}
    \]
    \caption{Denotational semantics for MK}
    \label{fig:semmk}
  \end{figure}

  \begin{figure}[H]

    \begin{tabular}{c}
      \begin{mathpar}

        \inferrule[Axiom]{ }{\tau \xrightarrow{id_\tau} \tau}

        \and

        \inferrule[Tensor]{\Gamma_1 \xto{t_1} \underline{\tau_1} \\ \Gamma_2 \xto{t_2} \underline{\tau_2}}{ \Gamma_1, \Gamma_2 \xto{t_1 \otimes t_2} \underline{\tau_1} \otimes \underline{\tau_2}}

        \and

        \inferrule[LetTensor]{\Gamma_1 \xto{t} \underline{\tau_1} \otimes \underline{\tau_2} \\ \Gamma_2 \otimes \underline{\tau_1} \otimes \underline{\tau_2} \xto{u} \underline{\tau}}{\Gamma_1 \otimes \Gamma_2 \xto{(id \otimes t); u} \underline{\tau}}

        \and

        \inferrule[Abstraction]{\Gamma \otimes  \underline{\tau_1} \xto{t} \underline{\tau_2}}{\Gamma \xto{\mathsf{cur}(\sem{t})} \underline{\tau_1} \lto \underline{\tau_2}}

        \and

        \inferrule[Application]{\Gamma_1 \xto{t} \underline{\tau_1} \lto \underline{\tau2} \\ \Gamma_2 \xto{u} \underline{\tau_1}}{\Gamma_1\otimes \Gamma_2 \xto{(t \otimes u); \mathsf{ev}} \tau_2}

      \end{mathpar}
    \end{tabular}
    \caption{Denotational semantics for LL}
    \label{fig:semll}
  \end{figure}

\section{Commutative Diagrams}
\label{app:diagrams}
\begin{figure}[H]
\begin{tikzcd}
	{(F(X) \otimes_{\cat D} F(Y)) \otimes_{\cat D} F(Z)} &&&&& {F(X) \otimes_{\cat D} (F(Y) \otimes_{\cat C} F(Z))} \\
	\\
	{F(X  \otimes_{\cat C} Y) \otimes_{\cat D} F(Z)} &&&&& {F(X) \otimes_{\cat D} F(Y \otimes_{\cat C} Z)} \\
	\\
	{F((X\otimes_{\cat C} Y) \otimes_{\cat C} Z)} & {} &&&& {F(X\otimes_{\cat C} (Y \otimes_{\cat C} Z))} \\
	&&&&& {}
	\arrow["{\mu \otimes id}", from=1-1, to=3-1]
	\arrow["\mu", from=3-1, to=5-1]
	\arrow["\alpha"', from=1-1, to=1-6]
	\arrow["{id \otimes \mu}"', from=1-6, to=3-6]
	\arrow["\mu"', from=3-6, to=5-6]
	\arrow["{F \alpha}", from=5-1, to=5-6]
      \end{tikzcd}

\begin{tikzcd}
	{1 \otimes_{\cat D} F(X)} &&& {F(1) \otimes_{\cat D} F(X)} && {F(X) \otimes_{\cat D} 1} &&& {F(X) \otimes_{\cat D} F(1)} \\
	\\
	{F(X)} &&& {F(1 \otimes_{\cat C} X)} && {F(X)} &&& {F(X \otimes_{\cat C} 1)}
	\arrow["{\epsilon \otimes id}", from=1-1, to=1-4]
	\arrow["{l^{\cat D}}"', from=1-1, to=3-1]
	\arrow["{F(l^{\cat C})}", from=3-4, to=3-1]
	\arrow["\mu", from=1-4, to=3-4]
	\arrow["{id\otimes \epsilon}", from=1-6, to=1-9]
	\arrow["\mu", from=1-9, to=3-9]
	\arrow["{F(r^{\cat D})}", from=3-9, to=3-6]
	\arrow["{r^{\cat D}}"', from=1-6, to=3-6]
      \end{tikzcd}
    \caption{Lax monoidal diagrams}
      \label{fig:lax}
\end{figure}

\section{Monoidal Monads and Their Algebras}
\label{app:monad}

An important theorem from the categorical probability literature is
that Markov categories are an abstraction of programming in the
Kleisli category of monoidal affine monads, where affinity means that
$T 1 \cong 1$.

\begin{theorem}[\cite{fritz2020}]
Let $(\cat{C}, \times, 1)$ be a cartesian category and $T : \cat{C} \to \cat{C}$ a monoidal (affine) monad. The Kleisli category $\cat{C}_T$ is a Markov category.
\end{theorem}

The monoidal product of $\cat{C}_T$ is $\times$ with unit $1$, the copy operation is given by $\Delta_X; \eta_X : X \to T (X \times X)$ and the deletion operation is given by $T 1 \cong 1$ and $1$ being terminal. 

Furthermore, under certain conditions, the Eilenberg-Moore category $\cat{C}^T$ for monoidal monads is symmetric monoidal closed. The monoidal unit is given by $T I$, the monoidal product is given by the coequalizer depicted in Figure~\ref{fig:emtensor} and the closed struture is given by the equalizer depicted in Figure~\ref{fig:emhom}.

\begin{theorem}
Let $\cat{C}$ be a symmetric monoidal closed category with equalizers, reflexive co-equalizers and $T : \cat{C} \to \cat{C}$ a monoidal monad. The category $\cat{C}^T$ is also symmetric monoidal closed.
\end{theorem}

\begin{figure}[]
    \centering
\[\begin{tikzcd}
	{T(TX \otimes TY)} && {TT(X \otimes Y)} && {T(X\otimes Y)} & {X \otimes_T Y}
	\arrow["T\kappa", from=1-1, to=1-3]
	\arrow["\mu", from=1-3, to=1-5]
	\arrow[bend right=15,swap, from=1-1, to=1-5]{drr}{T(x \otimes y)}
	\arrow[from=1-5, to=1-6]
\end{tikzcd}\]    
\caption{Symmetric Monoidal Structure in $\cat{C}^T$}
    \label{fig:emtensor}
\end{figure}

\begin{figure}[]
\centering
\[\begin{tikzcd}
	{X \multimap_T Y} & {X \multimap Y} && {T X \multimap T Y} && {T X \multimap Y}
	\arrow["s", from=1-2, to=1-4]
	\arrow["{id_{TX}\multimap y}", from=1-4, to=1-6]
	\arrow[bend right=15,swap, from=1-2, to=1-6]{drr}{x \multimap id_Y}
	\arrow[from=1-1, to=1-2]
\end{tikzcd}\]
\caption{Closed Structure in $\cat{C}^T$}
    \label{fig:emhom}
\end{figure}

Even though, in general, in order to define the monoidal product one requires a coequalizer, for our purposes we are only interested in products of the form $T A \otimes_T T B$ which, luckily, are easier to characterize, since the equality $TX \otimes_T TY = T(X \otimes Y)$ holds \cite{seal2013}.

In this case the lax monoidal transformations $\mu_{X, Y} : T X \otimes_T T Y \to T(X \otimes Y)$ and $\epsilon : F I \to F I$ are simply the identity morphisms. Besides, by using the universal properties of coequalizers it is possible to show the equality $\tilde{\alpha}_{TX, TY, TZ} = \alpha_{X, Y, Z}$, where $\tilde\alpha$ is the associator for the monoidal product $\otimes_T$.

\begin{theorem}
Let $\cat{C}$ be a symmetric monoidal category with reflexive co-equalizers and $T : \cat{C} \to \cat{C}$ a monoidal monad. The triple $(\iota, \mu, \epsilon)$ is a lax monoidal functor.
\end{theorem}
\begin{proof}
The proof follows by unfolding the definitions.
\end{proof}

\section{Proofs}
\label{app:proof}
Theorem~\ref{th:subst}.  Let $\Gamma, x : \underline{\tau_1} \vdash t : \underline{\tau}$ and $\Delta \vdash u : \underline{\tau_1}$ be well-typed terms, then $\Gamma, \Delta \vdash \subst t x u : \underline{\tau}$
\begin{proof}
  The proof follows by structural induction on the typing derivation $\Gamma, x : \underline{\tau_1} \vdash t : \underline{\tau}$:

  \begin{itemize}
  \item Axiom: Since $t = x$ then $\subst t x u = u$ and $\underline{\tau_1} = \underline{\tau}$.
  \item Abstraction: By hypothesis, $\Gamma, x : \underline{\tau_1}, y : \underline{\tau_2} \vdash t : \underline{\tau_3}$. Since we can assume wlog that $x \neq y$ and that $y \notin \Delta$, $\subst {\lamb y t} x u = \lamb y {\subst t x u}$. Therefore we can show that $\Gamma, \Delta \vdash \lamb y {\subst t x u} : \tau_2 \lto \tau_3$ by applying the rule Abstraction and by the induction hypothesis.
  \item Application: $\subst {\app {t_1} {t_2}} x u = \app {\subst {t_1} x u} {\subst {t_2} x u}$. Since the language LL is linear, only one of $t_1$ or $t_2$ will have $x$ as a free variable. By symmetry we can assume that $t_1$ has $x$ as a free variable and we can prove $\Gamma, \Delta \vdash \app {\subst {t_1} x u} {t_2} : \underline{\tau}$ by applying the rule Application and by the induction hypothesis.
    \item Sample: It is easy to prove that $\subst {(\sample t y M)} x u = \sample {(\subst t x u)} y M$
  \end{itemize}
\end{proof}

Theorem~\ref{th:comp}.  Let $x_1 : \tau_1, \cdots, x_n : \tau_n \vdash t : \tau$ and $\Gamma_i \vdash t_i : \tau_i$ be well-typed terms. $\sem{\Gamma_1, \cdots, \Gamma_n \vdash \subst t {\overrightarrow{x_i}} {\overrightarrow{t_i}} : \underline{\tau}} = (\sem{\Gamma_1 \vdash t_1 : \underline{\tau_1}} \otimes \cdots \otimes \sem{\Gamma_n \vdash t_n : \underline{\tau_n }}); \sem{\Gamma_1, \cdots, \Gamma_n } \vdash t : \underline{\tau}$.
\begin{proof}
  The proof follows by induction on the typing derivation of $t$.
    \begin{itemize}
  \item Axiom: Since $t = x$ then $\subst t x {t_0} = {t_0}$ and $\sem{\subst t x {t_0}} = \sem{t_0} = \sem{t_0}; id = \sem{t_0}; \sem{x}$.
  \item Unit: Since $t = x$ then $\subst t x {t_0} = {t_0}$ and $\sem{\subst t x {t_0}} = \sem{t_0} = \sem{t_0}; id = \sem{t_0}; \sem{x}$.
  \item Tensor: We know that $t = t_1 \otimes t_2$. Furthermore, from linearity we know that each free variable appears either in $t_1$ or in $t_2$. Without loss of generality we can assume that $\subst {(t_1 \otimes t_2)} {x_1, \cdots , x_n} {u_1,\cdots , u_n} = (\subst {t_1} {x_1, \cdots, x_k} {u_1, \cdots, u_k}) \otimes (\subst {t_2} {x_{k+1}, \cdots, x_n} {u_{k+1}, , \cdots, u_n})$. We can conclude this case from the induction hypothesis and functoriality of $\otimes$. 
  \item LetTensor: This case follows from the functoriality of $\otimes$ and the induction hypothesis.
  \item Abstraction: This case follows from unfolding the definitions, using the induction hypothesis and by naturality of $\mathsf{cur}$.
  \item Application: Analogous to the Tensor case
  \item Sample: This case is analogous to the Tensor case.
  \end{itemize}

\end{proof}

\end{document}